\def\BibTeX{{\rm B\kern-.05em{\sc i\kern-.025em b}\kern-.08em
    T\kern-.1667em\lower.7ex\hbox{E}\kern-.125emX}}
\newcommand{\spara}[1]{\smallskip\noindent{\bf{#1}}}
\newcommand{\squishlist}{\begin{list}{$\bullet$}
  { \setlength{\itemsep}{0pt}
     \setlength{\parsep}{3pt}
     \setlength{\topsep}{3pt}
     \setlength{\partopsep}{0pt}
     \setlength{\leftmargin}{1.5em}
     \setlength{\labelwidth}{1em}
     \setlength{\labelsep}{0.5em} } }
\newcommand{\squishend}{
\end{list}  }
\newcommand{\ptitle}[1]{\vspace{1mm}\noindent{\bf #1.}}
\newcommand{\norm}[1]{\big\lVert#1\big\rVert}
\DeclareMathOperator*{\argmax}{arg\,max}
\newtheorem{theorem}{Theorem}
\newtheorem{lemma}{Lemma}
\newtheorem{proposition}{Proposition}
\begin{document}

\title{Block-Approximated Exponential Random Graphs\\
\thanks{\textsuperscript{*}This work was done while the author was with Ghent University and presented at the DSAA 2020 conference.}
}

\author{\IEEEauthorblockN{Florian Adriaens\textsuperscript{*}}
\IEEEauthorblockA{\textit{KTH Royal Institute of Technology, Sweden} \\
%\textit{name of organization (of Aff.)}\\
%City, Country \\
first.last@gmail.com}
\and
\IEEEauthorblockN{Alexandru Mara\quad Jefrey Lijffijt\quad Tijl De Bie}
\IEEEauthorblockA{\textit{IDLab, Ghent University, Belgium} \\
%\textit{name of organization (of Aff.)}\\
%City, Country \\
first.last@ugent.be}
}

\maketitle

\begin{abstract}
An important challenge in the field of exponential random graphs (ERGs) is the fitting of non-trivial ERGs on large graphs.
By utilizing fast matrix block-approximation techniques, we propose an approximative framework to such non-trivial ERGs that result in dyadic independence (i.e., edge independent) distributions, while being able to meaningfully model \emph{local} information of the graph (e.g., degrees) as well as \emph{global} information (e.g., clustering coefficient, assortativity, etc.) if desired. 
This allows one to efficiently generate random networks with similar properties as an observed network, and the models can be used for several downstream tasks such as link prediction.
Our methods are scalable to sparse graphs consisting of millions of nodes.

Empirical evaluation demonstrates competitiveness in terms of both speed and accuracy with state-of-the-art methods---which are typically based on embedding the graph into some low-dimensional space--- for link prediction, showcasing the potential of a more direct and interpretable probablistic model for this task.
\end{abstract}

\begin{IEEEkeywords}
maximum entropy models, exponential random graphs, network modeling, link prediction
\end{IEEEkeywords}

\section{Introduction}
Network modeling is typically concerned with the following setting: given measurements on certain structural properties of a real-life network, such as node degrees, clustering coefficient, density, and so forth, one wishes to find a model for the network where samples generated by the model have similar values of the measured properties.
Prominent examples include stochastic block models \cite{NIPS2008_3578}, graphons \cite{borgs2014an} and exponential random graphs (ERGs) \cite{Holland}.

ERGs have received significant attention in numerous research areas \cite{Holland,Strauss,Snijders,Robins}, as they provide a well-founded probabilistic framework to network modeling; their study has been described as the ``statistical mechanics of networks'' \cite{ParkNewman}.
They arise naturally as the solution to the problem of finding the maximum entropy distribution over a set of graphs, while respecting constraints that a probabilistic statistic must equal an observed statistic in expectation.
Equivalently, the parameters of an ERG can be determined by maximum likelihood estimation of an exponential distribution for explaining the observed statistics in a principled manner.
The advantage of ERGs is that they can represent a wide range of structural tendencies, such as transitivity and degree heterogeneity, by capturing complicated dependence patterns that are not easily modeled by simpler probabilistic models.

The downside of classically specified ERGs is their limited scalability \cite{Goodreau}. Inferring the parameters of an ERG is often intractable for networks of even moderate size, because computing the partition function may require evaluating a summation over all $2^{\Theta(n^2)}$ for graphs with $n$ nodes.
The common approach is to approximately infer the parameters by MCMC sampling, after which a goodness of fit is measured by generating random networks from the ERG and comparing statistics with the observed network. Besides limited scalability, MCMC methods often have the problem of degeneracy \cite{Handcock}, assigning most probability mass to either near-empty or near-full graphs, for example when one is counting Markov neighborhood properties such as triangles \cite{Snijders}. 
%\todo[inline,author=Suggestion to add]{Also, the scalability of MCMC for this inference is still very limited.}
%A comprehensive list of desirable properties of a ``good'' network model is given by \cite{Chakrabarti:2006}.

ERGs typically lose efficiency when they aim to model global constraints that suggest dependencies between edges, such as degree assortativity or the existence of many triangles. Modeling such information is important in practice: e.g. if the number of triangles is large as compared to other graphs with the same local properties, this may indicate that edge formation is a result of a triadic closure process, which can be an important property to understand and model.

Addressing this challenge, in this paper we propose the use of fast matrix block-approximation techniques to derive intelligible and scalable approximations to non-trivial ERGs for large sparse graphs ($>10^6$ nodes). The resulting models are dyadically independent, while still meaningfully incorporating local information (degrees) and global information (triangles, assortativity, etc.).
%Global constraints can be important to model a graph well. For example, consider the sum of triangles in a graph. Informally speaking, if this sum is higher than in a random graph of the same size and with similar degree sequence, this may indicate that edge formation is a result of a triadic closure process, and not solely the cause of the degree sequence.
%We aim to design scalable models .

\spara{Contributions and roadmap.}
\squishlist
\item Discussion of a general dyadic independence model that is able to incorporate arbitrary (structural) features between pairs of nodes (Section~\ref{sec:entropymodels}).
\item Depending on the incorporated features, fitting the model parameters may not be scalable. We propose fast matrix block-approximation techniques to provide scalable approximations to the general model (Section~\ref{sec:lowrankapp}).
\item An empirical comparison with several state-of-the-art (embedding-based) methods for the task of predicting missing links, and a goodness-of-fit comparison with two well-known probablistic models (Section~\ref{sec:exp}).
\squishend

%%%%%%%%%%%%%%%%%%%%%%
%%%%%%%%%%%%%%%%%%%%%%
\section{MaxEnt models with structural constraints}\label{sec:entropymodels}
%%%%%%%%%%%%%%%%%%%%%%
%%%%%%%%%%%%%%%%%%%%%%
\ptitle{Notation}
Let $G=(V,E)$ be a graph with $|V|=n$ nodes and $|E|=m$ edges. The neighborhood of node $i$ is denoted as $\mathcal{N}(i)$.
The adjacency matrix of an observed graph $G$ is denoted as $\hat{\mathbf{A}} =[\mathbf{\hat{A}}_{ij}] \in \mathcal{A}$, where $\mathcal{A} = \{0,1\}^{n\times n}$ is the set of square matrices of size $n$.
A random matrix is denoted as $\mathbf{A} \in \mathcal{A}$.
The transpose of a matrix $\mathbf{A}$ is denoted as $\mathbf{A}^T$ and the Frobenius norm is denoted as $\norm{\mathbf{A}}$.
The expected value operator related to a distribution $P$ is denoted as $\mathbb{E}_P[\cdot]$.
The focus in this paper is restricted to undirected graphs, but extensions to directed networks are straightforward.

\ptitle{ERGs as maximum entropy (MaxEnt) models}
The setting of this paper is that, based on some measurements (statistics) of an observed graph $\hat{\mathbf{A}}$, we wish to find a probability distribution over $\mathcal{A}$, i.e. the set of all possible graphs over the same set of vertices.
ERGs are a family of statistical models that are often used for this task.
Let $s(\cdot)$ be the vector of statistics of a graph, e.g. one component of $s(\cdot)$ could be the count of all triangles.
An ERG with associated parameter vector $\theta$ is then typically introduced by positing an exponential distribution over $\mathcal{A}$:
\begin{align}
\label{eq:ergintro}
P_{\theta}(\mathbf{A}=A)= \exp(\theta^Ts(A)-\psi),
\end{align}
where $\psi$ denotes a normalizing constant.
Given an observed graph $\hat{\mathbf{A}}$, the parameters $\theta$ are then typically derived by maximum likelihood estimation (MLE).

An equivalent, but less-known, way of introducing ERGs is by looking for the maximum entropy distribution over $\mathcal{A}$, subject to constraints that the expected statistics are equal to the observed statistics. The MLE of (\ref{eq:ergintro}) is equal to this maximum entropy distribution \cite{hillar2013maximum}.
Throughout the paper, we will take the maximum entropy point-of-view.

%In Section~\ref{sec:exdegreeassort} we start by giving a preliminary example of a maximum entropy model on a graph with a constraint on degree assortativity.
%In Section~\ref{sec:genmodel} we generalize this to arbitrary structural constraints.
%We conclude this section with adressing some potential pitfalls of using these models in Section~\ref{sec:featureleakage}.

%%%%%%%%%%%%%%%%%%%%%%
\subsection{An example with degree assortativity\label{sec:exdegreeassort}}
%%%%%%%%%%%%%%%%%%%%%%
We start by formalizing what we \emph{wish} to solve, and end up relaxing these equations in order to have analytical expressions containing parameters that are tractable to infer.
As an example, suppose we are interested in finding the maximum entropy (MaxEnt) distribution over $\mathcal{A}$, subject to a constraint in expectation on each individual node degree, as well as a constraint in expectation on the total \emph{degree assortativity} in the graph, as measured by the sum of $|\mathcal{N}(i)|\cdot|\mathcal{N}(j)|$ over all edges $(i,j) \in E$. We aim to solve\footnote{We implicitly assume $\textstyle\sum P(\textbf{A}) = 1$ and $P(\textbf{A})\geq 0$ in all the MaxEnt problem formulations, but omit them for brevity.}
\begin{align}
\label{eq:maxent}
\displaystyle &\argmax_{P(\textbf{A})} \quad -\displaystyle \mathbb{E}_P[\log P(\textbf{A})], \\
\text{s.t. }\quad &\mathbb{E}_P[\textstyle\sum_{i,j} \textbf{A}_{ij}\textstyle\sum_c \textbf{A}_{ic}\textstyle\sum_r \textbf{A}_{rj}] = c, \notag\\
&\mathbb{E}_P[\textstyle \sum_{j}\textbf{A}_{ij}] = d_i \quad \forall i,\notag\\
&\mathbb{E}_P[\textstyle\sum_{i}\textbf{A}_{ij}] = d_j \quad \forall j,\notag
\end{align}
where $d_i=\sum_{j}\mathbf{\hat{A}}_{ij}$ is the observed degree of node $i$ in $G$, and $c=\sum_{i,j} \mathbf{\hat{A}}_{ij}\sum_c \mathbf{\hat{A}}_{ic}\sum_r \mathbf{\hat{A}}_{rj}$ is the observed assortativity measure.
A typical difficulty with this classically specified ERG is that in general the normalizing constant of the distribution $P$ is infeasible to compute, because of edge dependencies introduced by the assortativity constraint. As such, the parameters of the distribution are intractable to compute exactly.
This is in contrast to a MaxEnt model subject to only degree constraints. As observed by different authors \cite{ParkNewman, DeBie2011, Parisi}, in this case $P$ is a dyadic independence model: it factorizes as a product of independent Bernoulli distributions, one for each node pair $(i,j) \in V \times V$.
Moreover, in a degree-only model, the number of unique parameters to be optimized over is fully determined by the number of \emph{unique} degrees in $G$.
As shown by \cite{DeBie2011}, for sparse $G$ (where $m=O(n)$), the problem has in fact only $O(\sqrt{n})$ free variables instead of the $O(n)$ original variables (one original variable for each degree constraint), making inference possible on very large networks.
In Section~\ref{sec:lowrankapp}, we show that we can do a similar reduction in variables also for more complex models.

\begin{figure}[tp]
\captionsetup[subfloat]{farskip=5pt,captionskip=1pt}
\captionsetup[subfigure]{justification=centering}
\centering
\subfloat[\label{fig:kar1}]{\scalebox{0.66}{\includegraphics[width=0.7\linewidth,trim=0cm 0cm 0cm 0cm,clip]{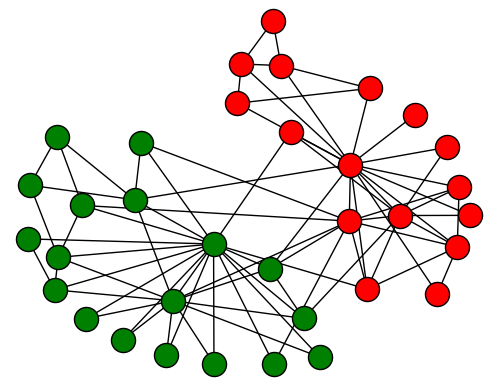}}}
\\
\subfloat[\label{fig:kar2}]{\scalebox{0.92}{\includegraphics[width=0.52\linewidth,trim=1cm 1cm 1cm 0.5cm,clip]{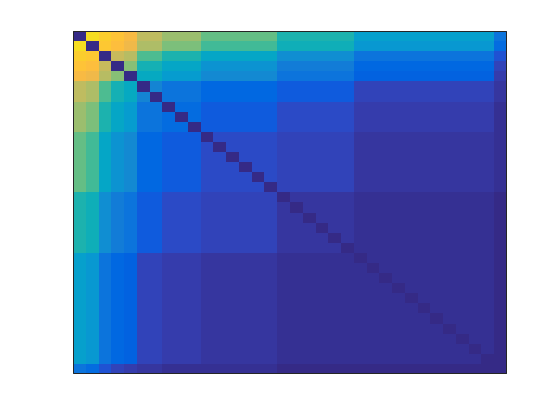}}}
\subfloat[\label{fig:kar3}]{\scalebox{0.92}{\includegraphics[width=0.52\linewidth,trim=1cm 1cm 1cm 0.5cm,clip]{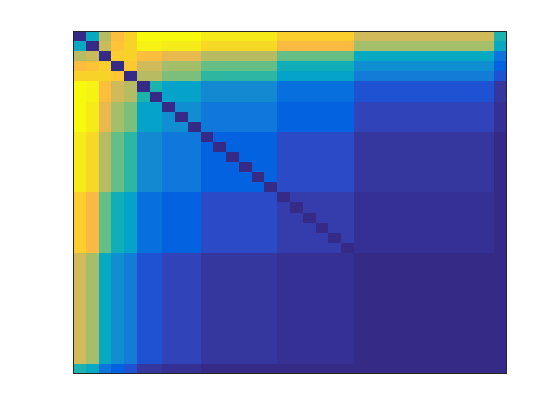}}}
\\
%\vspace*{-0.5mm}
\subfloat{\includegraphics[width=0.6\linewidth,trim=0cm 0.1cm 0cm 0.15cm,clip]{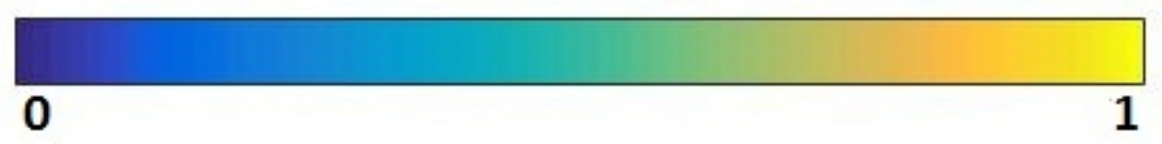}}
%\vspace*{-2mm}
\caption{(a) The \emph{Karate} network (b) MaxEnt edge succes probabilities with constraints on degrees (c) MaxEnt edge probabilities with constraints on degrees, and a global constraint on degree assortativity as given by (\ref{eq:maxent2}).
In both heatmaps, nodes are sorted in descending degree order (highest degree in the top left corner).\label{fig:Karate2}}
\end{figure}

One way to approximate (\ref{eq:maxent}) is to replace the row- and column sums of random variables $\sum_c\textbf{A}_{ic}$ by their expectation:
\begin{equation*}
\sum_c\textbf{A}_{ic} \approx \mathbb{E}_P[\sum_c\textbf{A}_{ic}] = d_i.
\end{equation*}
The approximated model of (\ref{eq:maxent}) then becomes
\begin{align}\label{eq:maxent2}
\displaystyle &\argmax_{P(\textbf{A})} \quad -\displaystyle E_P[\log P(\textbf{A})], \\
\text{s.t. }\quad &\mathbb{E}_P[\textstyle\sum_{i,j} d_id_j\textbf{A}_{ij}] = c, \notag\\
& \mathbb{E}_P[\textstyle\sum_{j}\textbf{A}_{ij}] = d_i \quad \forall i,\notag\\
& \mathbb{E}_P[\textstyle\sum_{i}\textbf{A}_{ij}] = d_j \quad \forall j.\notag
\end{align}
We omit the details, but it is easy to show that the solution to (\ref{eq:maxent2}) is again a dyadic independence model, and the complexity depends on the number of unique degrees in $G$.

Let us provide some intuition on the solution of (\ref{eq:maxent2}).
First, because the distribution has maximum entropy, it is the unique distribution that injects no side information on properties that were not taken into account as constraints \cite{Cover}.
Secondly, if the degree assortativity cannot be explained by a model that is inferred using \emph{only} degree constraints, for example if $c$ is larger than expected under a model where only degrees are constrained, then the optimum of (\ref{eq:maxent2}) will on average assign higher probabilities between pairs of high degree nodes and between pairs of low degree nodes, at the expense of pairs of nodes where one has a large and the other a low degree.
This results in a more accurate fit of the observed graph.

An example of a highly disassortative network is the \emph{Karate} dataset \cite{Karate}. It has a negative assortativity coefficient \cite{NewmanAssort} of $-0.48$, and thus nodes with similar degrees are less often connected.
This is confirmed in a visualization of the network in Figure~\ref{fig:kar1}.
The dataset essentially consists of two Karate club teachers (the two highest-degree nodes), mostly connected to their own students, with few edges between the two communities.
Most connections are between a high-degree node and a low-degree node, and the two teachers themselves are not connected.

Figure~\ref{fig:kar2} shows the edge probabilities between all pairs of nodes, when the network is modeled by a MaxEnt model with a constraint on the expectation of each individual node degree (the degree-only model).
Connections between high degree nodes are more likely, and thus it assigns most probability mass in the top left corner.
On the other hand, Figure~\ref{fig:kar3} shows the MaxEnt model with an additional assortativity constraint as given by (\ref{eq:maxent2}). It builds on the degree-only model by taking into account the network's disassortativity, and as such it lowers the edge probabilities between nodes of similar degree and increases the edge probabilities between nodes with dissimilar degrees.
Note that it correctly assigns a low probability for a connection between the two Karate teachers.

%%%%%%%%%%%%%%%%%%%%%%%%%%%%
\subsection{Generalizing to arbitrary features}\label{sec:genmodel}
%%%%%%%%%%%%%%%%%%%%%%%%%%%%%
Building on the previous example, we can view (\ref{eq:maxent2}) as a model that takes into account \emph{observed} features $f_{ij} \triangleq d_id_j$.
Instead of taking the product of two node degrees, we can simply extend this to arbitrary observed features.\footnote{Such a generalization is often less well-founded than the case of degree assortativity (where we replaced a sum of random variables with the sum of their expectations), but these models still turn out to be useful.}
Denote $\mathbf{F} = [f_{ij}] \in \mathbb{R}^{n \times n}$ as an induced pairwise feature matrix.
For example,  the degree matrix of a node $i$ is a matrix with ones on the $i$-th row and zeros elsewhere.
The common neighbor matrix (triangle counting) is given by  $\mathbf{F}_\text{CN} \triangleq \mathbf{\hat{A}}^2 = [|\mathcal{N}(i) \cap \mathcal{N}(j)|]$.
Counts of different types of graphlets besides triangles can be incorporated as well, e.g., by using so-called `weighted motif graphs'  \cite{AhmedEsti, Hone} as feature matrices.
%The focus is mainly on lower-order features (Section~\ref{sec:methodslowrank}).

\begin{figure}[tp]
\captionsetup[subfloat]{farskip=5pt,captionskip=1pt}
\captionsetup[subfigure]{justification=centering}
\centering
\subfloat[\label{fig:dola}]{\scalebox{1}{\includegraphics[width=0.5\linewidth,trim=0.2cm 1cm 0.2cm 1cm,clip]{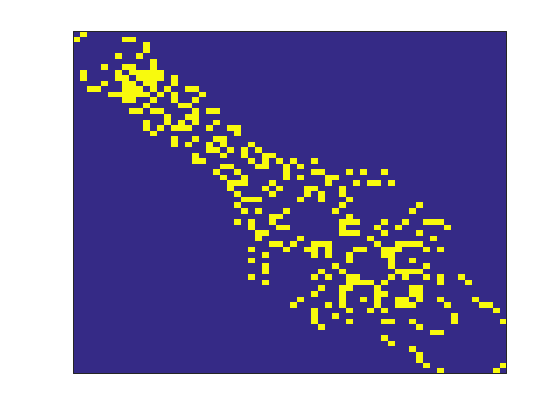}}}
\subfloat[\label{fig:dolb}]{\scalebox{1}{\includegraphics[width=0.5\linewidth,trim=0.2cm 1cm 0.2cm 1cm,clip]{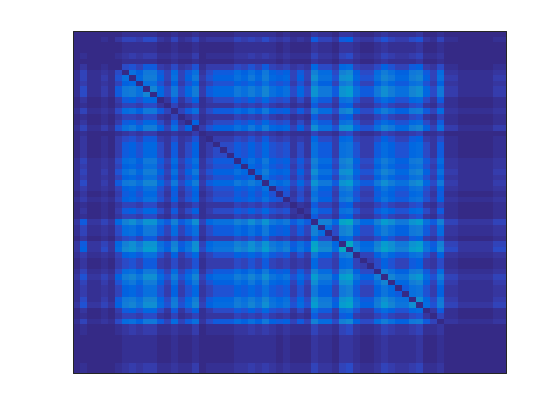}}}
\\[-1ex]
\subfloat[\label{fig:dolc}]{\scalebox{1}{\includegraphics[width=0.5\linewidth,trim=0.2cm 1cm 0.2cm 0.5cm,clip]{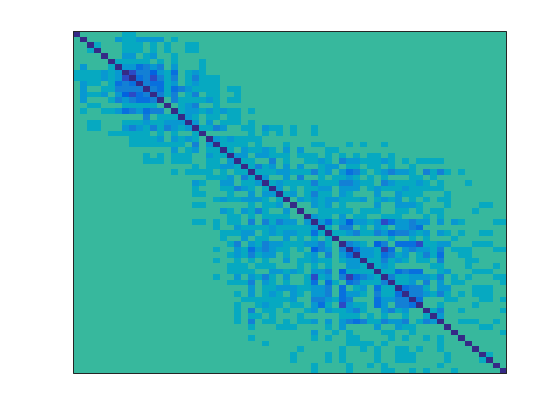}}}
\subfloat[\label{fig:dold}]{\scalebox{1}{\includegraphics[width=0.5\linewidth,trim=0.2cm 1cm 0.2cm 0.5cm,clip]{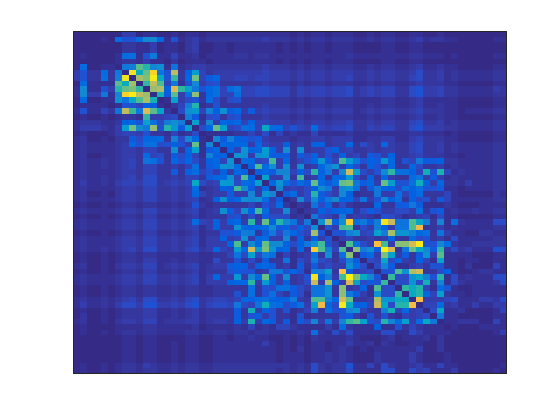}}}\\
%\vspace*{-1.5mm}
\subfloat{\includegraphics[width=0.6\linewidth]{colorbar.pdf}}
%\vspace*{-2mm}
\caption{MaxEnt edge succes probabilities on the \emph{Dolphins} network (a), when different kinds of constraints are taken into account; (b) MaxEnt with degrees (c) MaxEnt with only a global constraint on $\mathbf{F}_\text{CN}$ (d) MaxEnt with a combined constraint on both the degrees and $\mathbf{F}_\text{CN}$, demonstrating that combining local and global information results in a superior and more realistic model. \label{fig:dolphins}}
\end{figure}

Given $M$ of these feature matrices $\mathbf{F}_l \triangleq [f^l_{ij}]$ with $l=1,\ldots,M$, we aim to solve
\begin{align}\label{eq:maxentGeneral}
\displaystyle &\argmax_{P(\textbf{A})} \quad -\displaystyle \mathbb{E}_P[\log P(\textbf{A})], \\
\text{s.t. }\quad &\mathbb{E}_{P}[\textstyle\sum_{i,j} f^l_{ij}\textbf{A}_{ij}] =  c_l  \quad \forall l=1,\ldots,M.\notag
\end{align}
Where $c_l = \textstyle\sum_{i,j} f^l_{ij}\mathbf{\hat{A}}_{ij} = \sum_{(i,j) \in E} f^l_{ij}$ are the observed statistics.
Solutions to the class of models defined by (\ref{eq:maxentGeneral}) have the convenient property that the partition function (the normalizing constant) factorizes as a product over all possible edges.
As such (\ref{eq:maxentGeneral}) is again a dyadic independence model, with a Bernoulli success probability for a node pair $(i,j)$:
\begin{align*}
P(\textbf{A}_{ij}=1) = \dfrac{\exp(\textstyle\sum_{l=1}^M f_{ij}^l \lambda_l)}{1+\exp(\textstyle\sum_{l=1}^M f_{ij}^l \lambda_l)},
\end{align*}
where $\lambda_l$ denotes the Lagrange multiplier associated with the $l$-th constraint in (\ref{eq:maxentGeneral}).  These multipliers are found by unconstrained minimization of the convex Lagrange dual function:
\begin{align}\label{eq:lagrangian}
&L(\lambda_1,\ldots,\lambda_M) =\nonumber\\
&\textstyle\sum_{i,j}\log(1+\exp(\textstyle\sum_{l=1}^M f_{ij}^l\lambda_l))-\sum_{l=1}^M c_l \lambda_l.
\end{align}
The partial derivatives $\forall l=1,\ldots,M$ are computed as
\begin{align*}
\dfrac{\partial L}{\partial \lambda_l} = \textstyle\sum_{i,j} f_{ij}^l\cdot\dfrac{\exp\big(\textstyle\sum_{l=1}^M f_{ij}^l\lambda_l\big)}{1+\exp\big(\textstyle\sum_{l=1}^M f_{ij}^l\lambda_l\big)}-c_l.
\end{align*}

Instead of counting exact specifications in the original ERG, the models (\ref{eq:maxentGeneral}) can be seen as applying a `linear mask' in order to approximately count the specifications.
In fact, they are often identical to expressions found by pseudo likelihood estimation \cite{Strauss2,DuijnERG}.
Pseudo likelihood is mainly used for estimation of the original ERG parameters \cite{DuijnERG}, but we argue that dyadic independence models can be valuable by themselves.
For the exact derivations and a discussion regarding the pseudo likelihood, we refer to Appendix A.
%\todo[inline,author=Question]{Can we instead refer to a specific part of the appendix, such as Appendix A?}

In this paper, we focus on combining individual local constraints (degrees) with a limited number of global structural constraints.\footnote{Two or three global constraints often suffice to obtain qualitative models.}
Figure~\ref{fig:dolphins} shows an example of the predictive power of combining constraints on degrees with a global structural constraint $\mathbf{F}_\text{CN}$ (triangle count) on the \emph{Dolphins} dataset \cite{dolphins}.
Figure~\ref{fig:dola} shows the connectivity of the dataset.
Figure~\ref{fig:dolb} shows the edge probabilities according to a degree-only MaxEnt model. It assigns higher probability to edges between two high degree nodes, but it fails to capture any form of local community structure.
Figure~\ref{fig:dolc} is a MaxEnt model fitted with just the structural constraint $\mathbf{F}_\text{CN}$. Although it captures community structure, it fails to make good predictions about the actual edges in the network. Indeed, most observed edges have a low probability of being present,
and all node pairs with zero common neighbors are assigned a probability of 1/2 of being connected.
Figure~\ref{fig:dold} shows the model when combining degree constraints and the structural constraint $\mathbf{F}_\text{CN}$.
It leads to a remarkably good fit of the original network, while still leaving room for prediction and inference.

%%%%%%%%%%%%%%%%%%
\subsection{Scalability issues}\label{sec:scalabilityissues}
%%%%%%%%%%%%%%%%%%
Inference on large graphs is typically not possible when we have $M=O(n)$ constraints, as is the case when combining degree constraints with a limited number of global constraints.
Minimizing (\ref{eq:lagrangian}) can be viewed as a classical learning setting with $n^2$ training examples and $O(n)$ weights.
Standard gradient methods need $\Omega(n^2)$ computations per iteration and $\mathbf{F}$ needs to be stored in memory.
To resolve both issues, in Section~\ref{sec:lowrankapp} we propose to \emph{block-approximate} the feature matrices. In particular, Theorem~\ref{lem:redvar} shows that in sparse graphs $(m=O(n))$, the number of variables to be optimized over roughly reduces from $O(n)$ to $O(\sqrt{n})$. At the same time, block-approximated matrices are easily maintained in memory.

%\squishlist
%\item The optimization of (\ref{eq:lagrangian}) can be viewed as a classical learning setting with $n^2$ training examples and $O(n)$ weights. Standard gradient methods will need at least $\Omega(n^2)$ computations per iteration, which is infeasible.
%Stochastic Gradient Descent (SGD) methods converge at least as fast as batch gradient methods to a general area around the optimum, but the optimization proceeds slow during the final convergence phase. Indeed, the noisy gradient fluctuates around the optimum.
%SGD methods perform worse on the training cost, but show significantly better performance in terms of \emph{generalization} performance \cite{NIPS2003_2365,Bouttou}.
%However, we emphasize that we are not interested in optimizing any generalization cost. Instead we simply aim to optimize (\ref{eq:lagrangian}) as fast as possible (i.e., the training cost).
%\item Even if we are satisfied with a coarse solution provided by SGD methods, one still needs the requirement that $\mathbf{F}$ is stored in memory to obtain constant cost per SGD iteration. This is not feasible, since $\mathbf{F}$ will often be dense even if $\mathbf{\hat{A}}$ is sparse.
%Computing the features on the fly will be at least be $\Omega(n)$ for a large class of feature matrices (e.g., higher-order proximity matrices such as $\mathbf{F}=\mathbf{\hat{A}^2}+\mathbf{\hat{A}^3}$). This severely limits the speed of SGD methods.
%\squishend

%%%%%%%%%%%%%%%%%%
\subsection{Label leakage \& degeneracy}\label{sec:featureleakage} 
%%%%%%%%%%%%%%%%%%
To avoid overfitting, it is crucial to avoid `label leakage'.
More concretely, when predicting whether or not an edge exists, one should avoid making direct use of the actual existence of that edge while fitting the model.
Indeed, if one selects $\mathbf{F}=\mathbf{\hat{A}}$, the only solution to (\ref{eq:maxentGeneral}) is exactly $P(\mathbf{\hat{A}})=1$.
In other words, the model of the network is the network itself, rendering it completely useless for tasks like link prediction.
In contrast, $\mathbf{F}_\text{CN}$ is an excellent candidate for a link prediction feature matrix, since the number of common neighbors between two nodes is not directly related to the existence of that edge, and if it is the case, the model will actually learn the relation.
Remarkably, other methods often overlook this fact. For example, the method by \cite{Zhang2018ArbitraryOrderPP} considers embeddings defined by a truncated singular value decomposition of the adjacency matrix $\mathbf{\hat{A}}$.
The Katz centrality measure \cite{Katz1953} is another such example.
Nevertheless, for other tasks that require a closer fit to the observed network (e.g., network reconstruction), it might be useful to include adjacency information, since it allows to become arbitrarily close to the observed network.
For example, $\mathbf{F} = \mathbf{\hat{A}}+\beta\mathbf{\hat{A}}^2$ with small $\beta>0$ was used in Section~\ref{sec:gof} to obtain a close fit to the \emph{Facebook} network.

The other side of degeneracy, i.e. assigning most probability mass to near-empty graphs, is seemingly avoided by incorporating degree constraints into the model. For example, compare Figure~\ref{fig:dolc} with Figure~\ref{fig:dold}.
%For example, note the drastic differences in total probability between Figure~\ref{fig:dolc} and Figure~\ref{fig:dold}.

%%%%%%%%%%%%%%%%%%%%%%%%%%%%%%%%%%%%
%%%%%%%%%%%%%%%%%%%%%%%%%%%%%%%%%%%%
\section{Block-approximating feature matrices}\label{sec:lowrankapp}
%%%%%%%%%%%%%%%%%%%%%%%%%%%%%%%%%%%%
%%%%%%%%%%%%%%%%%%%%%%%%%%%%%%%%%%%%

%%%%%%%%%%%%%%%%%%%%%%%%%%%%%%%%%%%%
%%%%%%%%%%%%%%%%%%%%%%%%%%%%%%%%%%%%
\subsection{Motivation}\label{sec:motivation}
%%%%%%%%%%%%%%%%%%%%%%%%%%%%%%%%%%%%
%%%%%%%%%%%%%%%%%%%%%%%%%%%%%%%%%%%%
Prior work \cite{Adriaens2019, ecmltrees} on improving the scalability of fitting a general MaxEnt model (\ref{eq:maxentGeneral}) looks for permutations of $(\lambda_1,\ldots,\lambda_M)$ that leave the Lagrange dual function $L$ (\ref{eq:lagrangian}) invariant.
The convexity of $L$ then implies that if there is a permutation that maps $\lambda_i$ to $\lambda_j$, then there necessarily exists an optimum of (\ref{eq:lagrangian}) where $\lambda_i=\lambda_j$ \cite[Section 4.2]{Adriaens2019}.
%More specifically, \cite{Adriaens2019} look for equivalent Lagrange multipliers associated with the degree constraints.
Similarly as in \cite{Adriaens2019}, we look for equivalent Lagrange multipliers associated with the degree constraints.
Equivalent Lagrange multipliers are equated and the \emph{reduced model} is solved by standard convex optimization methods.

However, for general feature matrices equivalences are rare.
Thus, in this paper we propose to block-approximate the feature matrices. Let $\overline{\mathbf{F}} \in \mathbb{R}^{n\times n}$ be a block-approximation of $\mathbf{F}$, represented by a structure with $k \times k$ blocks, with each block being a submatrix with constant values.
Theorem~\ref{lem:redvar} states that, when fitting a MaxEnt model (\ref{eq:maxentGeneral}) with constraints on each node degree, as well as a global constraint as given by $\overline{\mathbf{F}}$, the number of free variables in the reduced model is at most $\sqrt{2km}+1$ instead of the original $n+1$ variables ($n$ degrees and one global constraint).
This implies a significant speed-up when fitting such MaxEnt models on large sparse graphs.

%\todo[inline,author=Question]{Can we instead refer to a specific part of the appendix, such as Appendix A?}

\begin{lemma}\label{lem:sparse}
Let $\mathbf{\hat{A}}$ be symmetric with $m$ non-zero entries. If the nodes are partioned into $k$ disjoint groups, then the sum over all groups of the number of unique degrees inside each group is at most $\sqrt{2km}$.
\end{lemma}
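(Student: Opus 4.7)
The plan is to combine a standard integer-sum lower bound with one application of the Cauchy--Schwarz inequality. First, I would fix a single group $g$ and let $u_g$ denote the number of distinct positive degrees realized by the nodes in that group. Because these are $u_g$ distinct positive integers, their sum is at least $1+2+\cdots+u_g = u_g(u_g+1)/2 \geq u_g^2/2$. Since each of these distinct positive values is witnessed by at least one node of group $g$, the sum of degrees taken over all nodes of group $g$ is likewise at least $u_g^2/2$.

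I would then sum this per-group inequality over all $k$ groups. The right-hand side becomes $\tfrac{1}{2}\sum_g u_g^2$, while the left-hand side is the total degree sum across all nodes of the graph, which, by symmetry of $\hat{\mathbf{A}}$, equals the number of non-zero entries of $\hat{\mathbf{A}}$, namely $m$. Rearranging gives $\sum_g u_g^2 \leq 2m$.

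A single Cauchy--Schwarz step then closes the argument:
\[
\Bigl(\sum_g u_g\Bigr)^{\!2} \;\leq\; k\sum_g u_g^2 \;\leq\; 2km,
\]
so $\sum_g u_g \leq \sqrt{2km}$, as claimed.

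The only step I would be careful about is the convention for isolated nodes: a node of degree $0$ contributes nothing to the degree sum yet could nominally inflate $u_g$ by one if $0$ is counted as a unique degree. Identifying $u_g$ with the number of distinct \emph{positive} degrees (as I did above) side-steps the issue cleanly; under the alternative convention one would start from the weaker bound $u_g(u_g-1)/2 \leq $ (sum of degrees in group $g$), yielding $\sum_g u_g \leq \tfrac{1}{2}(k + \sqrt{k^2+8km})$, an $O(k)$ additive correction that is negligible in the sparse regime $m=O(n)$ that the paper targets.
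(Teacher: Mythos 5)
Your proof is correct and follows essentially the same route as the paper: your per-group bound via the sum of distinct positive integers is precisely the cited inequality $u_i \leq \sqrt{2m_i}$ (re-derived rather than quoted), and your Cauchy--Schwarz aggregation $(\sum_g u_g)^2 \leq k\sum_g u_g^2$ is interchangeable with the paper's Jensen/concavity step on $\sqrt{\cdot}$. The explicit remark about degree-zero nodes is a sensible precaution but does not change the argument.
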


\begin{proof}
The nodes are partitioned into $k$ disjoint groups.
Denote $m_i$ as the number of ones in the rows of $\mathbf{\hat{A}}$, when the rows are restricted to the nodes in group $i$ ($1\leq i \leq k$).
Let $u_i$ denote the number of unique degrees in group $i$.
Then it holds that \cite[Lemma 2]{DeBie2011}:
\begin{equation*}
u_i \leq \sqrt{2\cdot m_i}
\end{equation*}

Summating over the $k$ groups:
\begin{align*}
\sum_{i=1}^k u_i \leq \sqrt{2}\cdot(\sqrt{m_1}+\ldots+\sqrt{m_k}).
\end{align*}
Utilizing $m_1+\ldots+m_k=m$, the concavity of $\sqrt{\cdot}$ and Jensen's inequality:
\begin{align*}
\sqrt{m_1}+\ldots+\sqrt{m_k} \leq k\cdot \sqrt{\frac{m}{k}}.
\end{align*}
Hence $\sum_{i=1}^k u_i \leq \sqrt{2km}$ for sparse matrices.
\end{proof}

\begin{theorem}\label{lem:redvar}
Let $\mathbf{\hat{A}}$ be symmetric with $m$ non-zero entries. Let $\overline{\mathbf{F}}$ be a $k \times k$ blockmatrix, with each block being a submatrix with constant entries. A MaxEnt model (\ref{eq:maxentGeneral}), with constraints on each individual degree and a constraint as given by $\overline{\mathbf{F}}$, can be solved by optimizing an unconstrained convex problem with at most $\sqrt{2km}+1$ variables.
\end{theorem}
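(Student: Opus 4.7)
The plan is to mimic the variable-reduction argument for the pure-degree MaxEnt model in \cite{DeBie2011}, but now applied to equivalence classes induced jointly by the degree sequence and by the block structure of $\overline{\mathbf{F}}$. First I would write down the Lagrange dual of the MaxEnt problem \eqref{eq:maxentGeneral} in this specific setting: one multiplier $\lambda_i$ per degree constraint, and one multiplier $\mu$ for the single global constraint induced by $\overline{\mathbf{F}}$. The resulting dual has the form
\begin{equation*}
L(\lambda,\mu) = \sum_{i,j}\log\bigl(1+\exp(\lambda_i+\lambda_j+\mu\,\overline{f}_{ij})\bigr) - \sum_i \lambda_i d_i - \mu c,
\end{equation*}
which is convex and unconstrained in $(\lambda,\mu)\in\mathbb{R}^{n+1}$.

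The core observation is a symmetry of $L$. Say that two nodes $p$ and $q$ are \emph{equivalent} if (i) they belong to the same block of $\overline{\mathbf{F}}$ and (ii) they have the same observed degree. Condition (i) implies that the rows $\overline{f}_{p,\cdot}$ and $\overline{f}_{q,\cdot}$ agree entrywise on every index outside $\{p,q\}$, and similarly for columns; combined with (ii) it follows that swapping $\lambda_p\leftrightarrow\lambda_q$ (while keeping all other multipliers fixed) leaves every term of $L$ invariant. By convexity of $L$, averaging any optimum $(\lambda^\star,\mu^\star)$ with its image under this swap yields another optimum in which $\lambda_p=\lambda_q$. Iterating this argument over a spanning set of transpositions inside each equivalence class, we conclude that there exists an optimum in which all nodes in the same equivalence class share a common Lagrange multiplier. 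Reparametrizing the dual in terms of one variable per class plus $\mu$ gives an equivalent unconstrained convex problem whose number of free variables equals (number of equivalence classes) $+\,1$.

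It remains to bound the number of equivalence classes. Because the blocks of $\overline{\mathbf{F}}$ partition the node set into $k$ disjoint groups, and since nodes are equivalent iff they lie in the same group \emph{and} have the same degree, the total number of equivalence classes equals $\sum_{i=1}^k u_i$, where $u_i$ is the number of distinct degrees in group $i$. Lemma~\ref{lem:sparse} gives exactly $\sum_{i=1}^k u_i \le \sqrt{2km}$, so adding the $+1$ for $\mu$ yields the claimed bound of $\sqrt{2km}+1$ variables.

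The main obstacle I anticipate is writing the symmetry step cleanly: one must verify that swapping $\lambda_p$ with $\lambda_q$ really does fix every pairwise term $\log(1+\exp(\lambda_i+\lambda_j+\mu\overline{f}_{ij}))$, including the ``diagonal'' interaction between $p$ and $q$ themselves, and that the linear part $-\sum_i\lambda_i d_i$ is invariant precisely because $d_p=d_q$. Once this invariance is established, the convexity-plus-averaging argument is standard and the counting step reduces immediately to the already-proven Lemma~\ref{lem:sparse}.
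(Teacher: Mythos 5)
Your proposal is correct and follows essentially the same route as the paper: observe that the dual $L$ is invariant under swapping multipliers of nodes with equal degree in the same block, use convexity to conclude an optimum exists with those multipliers equated (the paper cites \cite[Section 4.2]{Adriaens2019} for the averaging step you spell out), and bound the number of equivalence classes by Lemma~\ref{lem:sparse}. The only cosmetic remark is that the symmetrization is cleanest done by averaging an optimum over the whole permutation group within each class at once (iterating single transpositions can disturb previously equated pairs), but this does not change the argument.
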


\begin{proof}
Let $\lambda_i$ be the Lagrange multiplier associated with the degree constraint of node $i$. Observe that the function $L$ $(\ref{eq:lagrangian})$ is invariant if one swaps $\lambda_i$ with 
$\lambda_j$, when $i$ and $j$ are part of the same block \emph{and} have the same degree $d_i=d_j$.
Convexity of $L$ then implies there exists an optimum where $\lambda_i = \lambda_j$ \cite[Section 4.2]{Adriaens2019}. As such, the number of free variables is equal to the number of unique degrees in each block (plus one variable for the global constraint).
By Lemma~\ref{lem:sparse}, the total number of unique degrees summated over $k$ blocks is bounded by $\sqrt{2km}$.
\end{proof}

%%%%%%%%%%%%%%%%%%%%%%%%%%%%%%%%%%%%
%%%%%%%%%%%%%%%%%%%%%%%%%%%%%%%%%%%%
\subsection{Methods}\label{sec:methodslowrank}
%%%%%%%%%%%%%%%%%%%%%%%%%%%%%%%%%%%%
%%%%%%%%%%%%%%%%%%%%%%%%%%%%%%%%%%%%
In this section, we discuss several methods for obtaining fast and qualitative block-approximations for certain classes of feature matrices.
Most methods rely on a fast top d eigendecomposition of the---assumed to be sparse---adjacency matrix $\hat{\mathbf{A}}$.

%%%%%%%%%%%%%%%%%%%%%%%%%%%%%%%%%%%%
%%%%%%%%%%%%%%%%%%%%%%%%%%%%%%%%%%%%
\subsubsection{Spectral clustering adjacency polynomials}\label{sec:hop}
%%%%%%%%%%%%%%%%%%%%%%%%%%%%%%%%%%%%
%%%%%%%%%%%%%%%%%%%%%%%%%%%%%%%%%%%%
A principled way of block-approximating a feature matrix $\mathbf{F}$ is to first spectral cluster $\mathbf{F}$ and then replacing rows with centroids.
Assuming $\mathbf{F}$ is real and symmetric, spectral clustering first calculates a truncated eigenvalue decomposition to get an optimal low-rank approximation of $\mathbf{F}$, after which k-means is used to cluster the nodes in the low dimensional space \cite{Ulrike}.
A block-approximation of $\mathbf{F}$ is then found by
\begin{align}\label{fsimap}
\mathbf{F} \approx \mathbf{U}\mathbf{S}\mathbf{U}^T \approx \mathbf{U}_c\mathbf{S}\mathbf{U}_c^T,
\end{align}
with $\mathbf{U}, \mathbf{U}_c \in \mathbb{R}^{n \times d}$ and $\mathbf{S} \in \mathbb{R}^{d \times d}$ a diagonal matrix containing the signs of the $d$ largest (in absolute value) eigenvalues of $\mathbf{F}$. The matrix $\mathbf{U} = \mathbf{V}\sqrt{\mathbf{\Sigma}}$ consists of the rescaled eigenvectors $\mathbf{V}$ corresponding to the top $d$ eigenvalues, where $\mathbf{\Sigma}$ denotes a diagonal matrix with the absolute values of these eigenvalues.  The matrix $\mathbf{U}_c$ is defined by replacing each row of $\mathbf{U}$ by its respective cluster centroid.
Notice that symmetry of $\mathbf{F}$ is maintained by both of the approximations in (\ref{fsimap}).
Now let $\overline{\mathbf{F}}\triangleq \mathbf{U}_c\mathbf{S}\mathbf{U}_c^T$.
The following proposition gives a bound on the expected distance between $\overline{\mathbf{F}}$ and $\mathbf{F}$.\footnote{This should be a standard result in spectral clustering theory, but the proof is given for completeness.}

\begin{proposition}\label{thm:bound}
Let $|\lambda_1| \geq |\lambda_2| \geq \ldots \geq |\lambda_d|$ be the $d$ largest eigenvalues of $\mathbf{ F}$.
Let $\phi_{OPT,k}$ denote the optimal clustering objective value with $k$ clusters.
Using k-means++ as a (randomized) clustering algorithm, the expected error $\mathbb{E}\big[\norm{\mathbf{F}-\overline{\mathbf{F}}}\big]$ is at most
\begin{equation*}
2\sqrt{\textstyle \sum_{i=1}^{d}|\lambda_i|}\cdot O(\emph{log}(k))\cdot \phi_{OPT,k}
\end{equation*}
additively larger than any optimal rank $d$ approximation of $\mathbf{F}$.
\end{proposition}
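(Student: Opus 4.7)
The plan is to use the triangle inequality to split $\|\mathbf{F}-\overline{\mathbf{F}}\|$ into the optimal rank-$d$ approximation error plus a remainder due to replacing $\mathbf{U}$ by its row-clustered version $\mathbf{U}_c$. Let $\mathbf{F}_d \triangleq \mathbf{U}\mathbf{S}\mathbf{U}^T$ denote the rank-$d$ spectral reconstruction of $\mathbf{F}$; since $\mathbf{F}$ is symmetric and $\mathbf{S}$ encodes the signs of the $d$ largest-in-magnitude eigenvalues, Eckart--Young gives that $\mathbf{F}_d$ already attains the minimum Frobenius error among rank-$d$ approximations. So $\|\mathbf{F}-\mathbf{F}_d\|$ supplies the baseline on the right-hand side of the proposition, and the whole task reduces to controlling $\mathbb{E}\big[\|\mathbf{F}_d - \overline{\mathbf{F}}\|\big]$.

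For this remainder I would add and subtract $\mathbf{U}\mathbf{S}\mathbf{U}_c^T$ to write
\[
\mathbf{F}_d - \overline{\mathbf{F}} \;=\; \mathbf{U}\mathbf{S}(\mathbf{U}-\mathbf{U}_c)^T \;+\; (\mathbf{U}-\mathbf{U}_c)\mathbf{S}\mathbf{U}_c^T,
\]
and apply the triangle inequality together with the submultiplicative bound $\|AB\|_F \leq \|A\|_2 \|B\|_F$. Because $\mathbf{S}$ is a diagonal sign matrix, $\|\mathbf{S}\|_2 = 1$, leaving $\|\mathbf{F}_d - \overline{\mathbf{F}}\| \leq (\|\mathbf{U}\|_2 + \|\mathbf{U}_c\|_2)\cdot\|\mathbf{U}-\mathbf{U}_c\|$. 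Then I would use $\|\cdot\|_2 \leq \|\cdot\|_F$ together with $\|\mathbf{U}\|_F^2 = \mathrm{tr}(\mathbf{U}^T\mathbf{U}) = \mathrm{tr}(\mathbf{\Sigma}) = \sum_{i=1}^d |\lambda_i|$, plus a short Jensen-type observation (each centroid's squared norm is at most the average squared norm of the rows assigned to it) to conclude $\|\mathbf{U}_c\|_F \leq \|\mathbf{U}\|_F$. Combining these yields $\|\mathbf{F}_d - \overline{\mathbf{F}}\| \leq 2\sqrt{\sum_{i=1}^d |\lambda_i|}\cdot \|\mathbf{U}-\mathbf{U}_c\|$.

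The final step is to recognize that $\|\mathbf{U}-\mathbf{U}_c\|^2 = \sum_i \|u_i - c(u_i)\|^2$ is precisely the k-means clustering cost on the rows of $\mathbf{U}$, since $\mathbf{U}_c$ replaces each row by its cluster centroid. Invoking the Arthur--Vassilvitskii guarantee that k-means++ seeding produces a clustering with expected cost at most $O(\log k)\cdot \phi_{OPT,k}$, and taking expectations in the previous bound (appealing to Jensen's inequality to pass the expectation inside the square root if one wishes to work with $\|\mathbf{U}-\mathbf{U}_c\|$ rather than its square), then delivers the stated additive bound over the optimal rank-$d$ Frobenius error.

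The main routine obstacle is the matrix-norm bookkeeping in the middle step: one must be careful to apply $\|AB\|_F \leq \|A\|_2 \|B\|_F$ to the correct factor in each of the two summands (otherwise one ends up with an extra Frobenius norm of $\mathbf{S}$ or a needless $\sqrt{n}$ slack), and to verify that the clustered matrix $\mathbf{U}_c$ inherits the Frobenius-norm bound from $\mathbf{U}$ without picking up cluster-size factors. Once those bookkeeping details are handled, the rest is a fairly standard reduction of matrix approximation to approximate k-means clustering of the rescaled eigenvector rows.
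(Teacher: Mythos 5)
Your proposal is correct and follows essentially the same route as the paper's proof: triangle inequality to split off the optimal rank-$d$ error, a second add-and-subtract decomposition of $\mathbf{U}\mathbf{S}\mathbf{U}^T-\mathbf{U}_c\mathbf{S}\mathbf{U}_c^T$ bounded by $2\sqrt{\sum_{i=1}^d|\lambda_i|}\cdot\norm{\mathbf{U}-\mathbf{U}_c}$, and then the k-means++ guarantee of Arthur--Vassilvitskii. The only cosmetic difference is that you control the $\mathbf{U}_c$ factor via $\norm{\mathbf{U}_c}\leq\norm{\mathbf{U}}$ from the centroid-averaging argument, whereas the paper writes $\mathbf{U}_c=\mathbf{C}\mathbf{U}$ with $\mathbf{C}$ an orthogonal projection; both amount to the same observation, and your Jensen remark about the squared versus unsquared clustering objective is, if anything, slightly more careful than the paper's final step.
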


\begin{proof}
Invoking the triangle inequality, and since $\mathbf{U}\mathbf{S}\mathbf{U}^T$ is the optimal rank $d$ approximation of $\mathbf{F}$, we can write
\begin{align*}
\norm{\mathbf{F}-\mathbf{U}\mathbf{S}\mathbf{U}^T} &\leq \norm{\mathbf{F}-\mathbf{U}_c\mathbf{S}\mathbf{U}_c^T}\\
&\leq \norm{\mathbf{F}-\mathbf{U}\mathbf{S}\mathbf{U}^T} + \norm{\mathbf{U}\mathbf{S}\mathbf{U}^T-\mathbf{U}_c\mathbf{S}\mathbf{U}_c^T}.
\end{align*}
The latter term can again be bounded with the triangle inequality
\begin{align}\label{eq:proofkmeans}
&\norm{\mathbf{U}\mathbf{S}\mathbf{U}^T-\mathbf{U}_c\mathbf{S}\mathbf{U}_c^T}\notag\\
&\leq \norm{\mathbf{U}\mathbf{S}\mathbf{U}^T-\mathbf{U}_c\mathbf{S}\mathbf{U}^T} + \norm{\mathbf{U_c}\mathbf{S}\mathbf{U}^T-\mathbf{U_c}\mathbf{S}\mathbf{U_c}^T} \notag\\
&= \norm{(\mathbf{U}-\mathbf{U}_c)\mathbf{S}\mathbf{U}^T}+\norm{\mathbf{U}_c\mathbf{S}(\mathbf{U}^T-\mathbf{U}_c^T)}.
\end{align}
Observe that we can write $\mathbf{U}_c = \mathbf{C}\mathbf{U}$, where $\mathbf{C} \in \mathbb{R}^{n \times n}$ denotes a matrix with in each row entries equal to $1/n_i$ for nodes that are in the same bin, and zero otherwise. Here, $n_i$ denotes the size of a cluster $i$, i.e. $\mathbf{C}$ is a matrix that replaces each row in $\mathbf{U}$ by its cluster centroid.
It's easy to see that $\mathbf{C}$ is an orthogonal projection matrix, in the sense that $\mathbf{C}^2=\mathbf{C}$ and $\mathbf{C}=\mathbf{C}^T$. As such, for any matrix $\mathbf{B}$  it holds that  $\norm{\mathbf{C}\mathbf{B}}\leq\norm{\mathbf{B}}$.
Applying to the second term in (\ref{eq:proofkmeans}) yields
\begin{align*}
&\norm{(\mathbf{U}-\mathbf{U}_c)\mathbf{S}\mathbf{U}^T}+\norm{\mathbf{U}_c\mathbf{S}(\mathbf{U}^T-\mathbf{U}_c^T)}\\ 
&\leq \norm{(\mathbf{U}-\mathbf{U}_c)\mathbf{S}\mathbf{U}^T}+\norm{\mathbf{U}\mathbf{S}(\mathbf{U}^T-\mathbf{U}_c^T)}\\
&\leq 2 \sqrt{\textstyle \sum_{i=1}^{d}|\lambda_i|} \cdot \norm{\mathbf{U}-\mathbf{U_c}}.
\end{align*}
The result follows by noting that $\norm{\mathbf{U}-\mathbf{U_c}}$ is the k-means objective function and applying the bound in expectation for the kmeans++ algorithm \cite{kmeanspp}.
\end{proof}

The bound from Proposition~\ref{thm:bound} gives insight in how the dimension $d$ and number of bins $k$ affect the block-approximation (\ref{fsimap}).
Increasing $k$ will benefit the approximation since $\overline{\mathbf{F}}$ becomes closer to the optimal rank $d$ approximation.
However, for fixed $k$, increasing $d$ does \emph{not} always benefit the block-approximation.
%Indeed, since that the second approximation in $(\ref{fsimap})$, the clustering approximation, gets worse with increasing dimension $d$.
Indeed, as the clustering approximation gets worse with increasing dimension $d$ \cite{kmeansSpeed}, the effect on the overall block-approximation could potentially be detrimental, which is confirmed in practice.
%This is confirmed in practice, where the gains of increasing $d$ are often negligible.
%Figure~\ref{fig:influenced} shows the influence of $k$ and $d$ on the error when block-approximating $\mathbf{F}=\mathbf{\hat{A}^2}$ on two small datasets.\footnote{We note similar behaviour when block-approximating other matrices, as well as using other datasets.}
%Increasing the dimension $d$ results in similar, and  possibly even worse, block-approximations of the original matrix $\mathbf{F}$.
As a practical guideline, we advise to keep $d$ small, and selecting a high $k$ while maintaining tractable computational complexity (Section~\ref{sec:ort}).

Since $\mathbf{F}$ is often dense, directly calculating $\mathbf{F}$ and performing a top $d$ eigendecomposition is not scalable both memory and timewise.
Instead, for \emph{higher-order proximity} matrices, i.e. matrices that are expressed as polynomials of $\mathbf{\hat{A}}$ with positive coefficients, one can directly use a (fast) eigendecomposition of $\mathbf{\hat{A}}$ (which is typically sparse) to get the top $d$ eigendecomposition \cite{HIGHAM20035,Zhang2018ArbitraryOrderPP}.
Indeed, if $\mathbf{F}$ is of the form
\begin{equation*}
\mathbf{F}_\text{P} \triangleq \text{poly}(\mathbf{\hat{A}}) = q_1\mathbf{\hat{A}}+\ldots+q_n\mathbf{\hat{A}}^n\quad q_i \geq 0,
\end{equation*}
then it's trivial to see that if $\alpha$ is an eigenvalue with eigenvector $x_{\alpha}$ of $\mathbf{\hat{A}}$ then $\text{poly}(\alpha)$ will be an eigenvalue of $\mathbf{F}_\text{P}$ with the same eigenvector $x_{\alpha}$.
Hence eigenvalues are rescaled, and eigenvectors are preserved.
The only difficulty is that this rescaling does not preserve the ordering. More precisely, the top $d$ eigenvalues of $\mathbf{F}_\text{P}$ are in general not equal to the rescaled top $d$ eigenvalues of $\mathbf{\hat{A}}$.
To get the top $d$ eigenvalues of $\mathbf{F}_\text{P}$ one needs to calculate $l\geq d$ eigenvalues of $\mathbf{\hat{A}}$, where $l$ denotes the index of the $d$-th positive eigenvalue of $\mathbf{\hat{A}}$, when sorted according to absolute value.
This is true since $q_i \geq 0$ guarantees that the ordering is preserved only for the $\emph{positive}$ eigenvalues of $\mathbf{\hat{A}}$.
In practice $l$ is often not significantly larger than $d$. For example, for sufficiently large Erdos-Renyi graphs $l \approx 2d$ due to the semicircle law \cite{Erdossemi}.

%%%%%%%%%%%%%%%%%%%%%%%%%%%
\subsubsection{Resource Allocation Index (RAI) and Adamic-Adar (AA)}\label{sec:rai}
%%%%%%%%%%%%%%%%%%%%%%%%%%%
One is not limited to polynomials of $\mathbf{\hat{A}}$ for scalable block-approximation.
For other practical feature matrices $\mathbf{F}$, we can still utilize an eigendecomposition of $\mathbf{\hat{A}}$ to get a fast block-approximation of $\mathbf{F}$.
One such matrix often used in the complex networks community is the so-called \emph{Resource Allocation Index} (RAI) \cite{ZhouResourceAllocation}.
The RAI defines a similarity score $r$ between two nodes $u$ and $v$ as $r_{uv} = \textstyle\sum_{k \in \mathcal{N}(u) \cap \mathcal{N}(v)} 1/d_k$.
The induced matrix $\mathbf{F}_\text{RAI} \triangleq [r_{uv}]$ can be written in terms of the adjacency matrix $\mathbf{\hat{A}}$:
\begin{equation}\label{eq:rai}
\mathbf{F}_\text{RAI} = \mathbf{\hat{A}}\mathbf{D}^{-1}\mathbf{\hat{A}},
\end{equation}
where $\mathbf{D}$ is a diagonal matrix containing the degree of each node. Assume $\mathbf{\hat{A}}$ is connected, such that $d_k>0$ and (\ref{eq:rai}) is well-defined.
Given a top $d$ eigendecomposition of $\mathbf{\hat{A}} \approx \mathbf{V}\Lambda\mathbf{V}^T$, with orthonormal columns of $\mathbf{V} \in \mathbb{R}^{n \times d}$, one obtains a rank $d$ approximation of $\mathbf{F}_\text{RAI}$ as follows:
\begin{align*}
\mathbf{F}_\text{RAI}  &\approx \mathbf{V}(\Lambda\mathbf{V}^T\mathbf{D}^{-1}\mathbf{V}\Lambda)\mathbf{V}^T \\
 &= \widetilde{\mathbf{V}}\widetilde{\mathbf{V}}^T,
\end{align*}
where $\widetilde{\mathbf{V}} \triangleq\mathbf{V}(\Lambda\mathbf{V}^T\mathbf{D}^{-1}\mathbf{V}\Lambda)^{1/2} \in \mathbb{R}^{n\times d}$.
These expressions are well-defined, since the positive definiteness of $\mathbf{V}^T\mathbf{D}^{-1}\mathbf{V} \in \mathbb{R}^{d\times d}$ implies positive definiteness of $\Lambda\mathbf{V}^T\mathbf{D}^{-1}\mathbf{V}\Lambda \in \mathbb{R}^{d\times d}$, hence the principal root exists and is unique.
To obtain a block-approximation $\overline{\mathbf{F}}_\text{RAI}$, simply cluster the rows of $\widetilde{\mathbf{V}}$ into bins and replace the rows by centroids.

The \emph{Adamic-Adar Index} (AA) \cite{Adamic} can be block-approximated in a very similar fashion.
It is defined similarly as (\ref{eq:rai}), by substituting $d_k$ by $\log(d_k)$.
Nodes with $d_k=1$ lead to an ill-defined $\mathbf{D}^{-1}$ matrix, but since they never occur as a common neighbor of two other nodes, these nodes are omitted in the calculations.

%%%%%%%%%%%%%%%%%%%%%%%%%%%
\subsubsection{Preferential Attachment (PA)}\label{sec:prefat}
%%%%%%%%%%%%%%%%%%%%%%%%%%%
The preferential attachment feature matrix \cite{Grover:2016ex,Parisi} is defined as $\mathbf{F}_\text{PA} \triangleq [|\mathcal{N}(i)| \cdot |\mathcal{N}(j)|]$, i.e., the matrix induced  by the product of the degrees.
By definition $\mathbf{F}_\text{PA}$ is already rank one; it is the outer product of a vector of degrees with itself.
For reasons discussed in the proof of Theorem~\ref{lem:redvar}, the natural way to block-approximate (in this case, exactly) $\overline{\mathbf{F}}_\text{PA}=\mathbf{F}_\text{PA}$ is by considering the unique degrees in the network.

%%%%%%%%%%%%%%%%%%%%%%%%%%%
\subsubsection{Cross/Skeleton decompositions}\label{sec:lowrankmethods}
%%%%%%%%%%%%%%%%%%%%%%%%%%%
Alternatively, general methods from the vast literature on scalable low-rank approximations \cite{GOREINOV, Achlioptas, Markovsky, KumarSchneider, Indyk2019SampleOptimalLA} can be utilized.
The most scalable methods (Cross/Skeleton decompositions) essentially sample rows and columns to obtain a low-rank decomposition.
%Cross/Skeleton decompositions have time complexity $O(d^2n)$, and use $O(dn)$ entries from the original matrix to obtain a rank $d$ approximation.
We did not utilize these methods. Instead, we restricted ourselves to the feature matrices defined above, for which a fast top $d$ eigendecomposition of $\mathbf{\hat{A}}$ was sufficient to obtain qualitative block-approximations.

%As such, the block-approximation (\ref{fsimap}) is done without ever computing or storing the original $\mathbf{F}$ matrix.
%Additionally, the approximation is easily stored in memory since we only need the $k$ centroids, a list of $n$ indices that maps each node to its centroid and the top $d$ eigenvalues, making the method highly scalable.

%%%%%%%%%%%%%%%%%
\subsection{Overall running time}\label{sec:ort}
%%%%%%%%%%%%%%%%%
%The computational steps required to solve (\ref{eq:lagrangian}), with degree constraints and a block-approximated feature matrix that is derived from a top $l$ eigendecomposition of $\mathbf{\hat{A}}$.

\ptitle{Eigendecomposition \& k-means}
Computing the top $l$ eigendecomposition of $\mathbf{\hat{A}}$ is efficient for sparse symmetric matrices using iterative methods \cite{Lehoucq1996DeflationTF, Stewart}, scaling linearly with $n$ for a fixed number of iterations.
Moreover, this is only computed once. Running the k-means++ algorithm for $t$ iterations has time complexity $O(tnkd)$ \cite{kmeanspp}. There are known instances \cite{kmeansSpeed} for which $t = 2^{\Theta(\sqrt{n})}$ until convergence, but the $O(\log(k))$ approximation ratio in expectation (Proposition 1) is valid even after the initialization of the clustering ($t=1$). Hence, overall running time is linear in $n$ for fixed $t, k$ and $d$.

\ptitle{Optimizing the reduced Lagrange dual function}
For sparse graphs $(m=O(n))$, Theorem~\ref{lem:redvar} shows that the final computational step is to solve an unconstrained convex optimization problem with $O(\sqrt{kn})$ variables.
The computational complexity for computing the gradient as well as the Hessian is $O(kn)$.
Space complexity for storing the gradient is  $O(\sqrt{kn})$ and $O(kn)$ for the Hessian, which roughly (for small $k$) equals the space complexity of storing the graph in sparse representation.
%As such, it is possible to directly incorporate curvature information in the optimization.
In Section~\ref{sec:runtime} we compare three different optimization strategies, and conclude that L-BFGS \cite{Liu1989} is particularly well-suited for this objective function.
This has been observed before, as L-BFGS has been described as the ``algorithm of choice'' for fitting log-linear (i.e., maximum entropy) models \cite{Malouf2002,andrew2007scalable}.

%%%%%%%%%%%%%%%%%
\subsection{Combining multiple feature matrices}\label{sec:combining}
%%%%%%%%%%%%%%%%%
Theorem~\ref{lem:redvar} is formulated for the case of only one block-approximated matrix $\overline{\mathbf{F}}$.
Combining multiple block-approximations $\overline{\mathbf{F}}_1,\ldots,\overline{\mathbf{F}}_{\gamma}$ can be done by considering the \emph{greatest lower bound} of the node binning.
Each $\overline{\mathbf{F}}_i$ defines a partition $B_i$ on the set of nodes.
A partition $B_i$ is a refinement of a partition $B_j$ if each element of $B_i$ is a subset of some element in $B_j$.
This relation $B_i \leq B_j$ defines a partial order \cite{birkhoff1967lattice} and the set of all partitions form a lattice.
A given set of partitions  $\{B_1, \ldots, B_{\gamma}\}$ thus has a greatest lower bound $b \leq B_i$.
Theorem~\ref{lem:redvar} still holds for multiple $\overline{\mathbf{F}}_i$ matrices, if one replaces $k$ with $|b|$ (and adds an additive term of $\gamma$ instead of one).
Assuming an equal number of bins $k$ for each $\overline{\mathbf{F}}_i$, worst-case this amounts to a lower bound with $|b| = \min\{k^{\gamma},n\}$ bins.
However, for a limited number of global constraints $\gamma$ and small $k$, we already obtain qualitative and scalable results in practice (Section~\ref{sec:exp}).
%For large $k$ and $\gamma$, this would not imply a speed-up to the original model (\ref{eq:lagrangian}) .
%Finally, we note that adding the Preferential Attachment block-approximation (Section~\ref{sec:prefat}) is in a sense a ``free'' constraint, as the number of unique degrees in the bins defined by $b$ does not change when adding the PA constraint to the model.

%%%%%%%%%%%%%%%%%
\section{Evaluation}
\label{sec:exp}
%%%%%%%%%%%%%%%%%

\begin{table}
	%\fbox{
	\caption{Summary of the datasets.}
	\label{table:datasets}
	\scalebox{1}{
		%\hspace*{-1em}
		\begin{tabular}{lccc} \\ \hline
			Dataset 		& Category & $|V|$ & $|E|$ \\ \hline
			{StudentDB} \cite{goethals2010}		& Relational & $395$   	& $3,423$	\\
			{Facebook} \cite{leskovec2015snap}	& Social & $4,039$	& $88,234$	\\
			{PPI} \cite{breitkreutz2007biogrid}    & Biological & $3,852$ 	& $37,841$ \\
			{Wikipedia} \cite{mahoney2011large}	& Language & $4,777$ 	& $92,295$ \\
			{GR-QC} \cite{leskovec2015snap}	& Collaboration & $4,158$ 	& $26,844$	\\
			{BlogCatalog} \cite{zafarani2009social}	& Social & $10,312$ & $333,983$	\\
			{YouTube} \cite{mislove-2007-socialnetworks}  & Social & $1,138,499$ & $2,990,443$ \\
			{Flickr} \cite{leskovec2015snap}    & Social & $80,513$ & $11,799,764$ \\
			{DBLP} \cite{leskovec2015snap}	& Collaboration & $317,080$ & $1,049,866$	\\ \hline
		\end{tabular}
	} %}
	%\end{center}
\end{table}

We test the performance of our ERG models on an important downstream task: link prediction (Section~\ref{sec:lp}). To ensure reproducibility for the link prediction experiment, we utilize the EvalNE library \cite{mara2019evalne}. Model implementations, as well as customized configuration files describing the experiments, are publicly available.\footnote{bitbucket.org/ghentdatascience/maxent-public}
Section~\ref{sec:gof} evaluates our proposed models on a social network using a goodness-of-fit approach.
Detailed runtime experiments and optimization strategies are discussed in Section~\ref{sec:runtime}.
Datasets are listed in Table \ref{table:datasets}.
Experiments were conducted on a Linux server with 256GB of RAM.

\begin{table*}
	\centering
	\caption{Average AUC for link prediction over three experiment repeats with different train/test splits for all methods. We use $\times$ to indicate that a method did not finish within a pre-set time of 4 hours. Best results for each dataset are highlighted in bold. }
	\label{table:lp}
	\scalebox{1}{
		\begin{tabular}{lccccccccc}
			\toprule{}
			& StudentDB & Facebook & PPI & Wiki & GR-QC & BlogCatalog & Flickr & YouTube & DBLP\\
			\midrule
			CN & 0.4101 & 0.9792 & 0.7737 & 0.8427 & 0.8602 & 0.9343 & 0.9379 & 0.5831 & 0.8127 \\
			JC & 0.4101 & 0.9754 & 0.7613 & 0.4954 & 0.8598 & 0.8045 & 0.9316 & 0.5831 & 0.8127 \\
			AA & 0.4101 & 0.9807 & 0.7764 & 0.8681 & 0.8604 & 0.9396 & 0.9383 & 0.5831 & 0.8127 \\
			PA & 0.9202 & 0.8392 & 0.9022 & 0.9175 & 0.8311 & 0.9638 & 0.9676 & 0.9913 & 0.8866 \\
			RAI & 0.4101 & 0.9813 & 0.776 & 0.8753 & 0.8603 & 0.9399 & 0.9376 & 0.5831 & 0.8127 \\ \hline
			DeepWalk & 0.8865 & 0.9878 & 0.8867 & 0.8903 & \textbf{0.9627} & 0.9393 & 0.9772 & $\times$ & $\times$ \\
			Node2vec & 0.9144 & \textbf{0.993} & 0.8552 & 0.8923 & 0.9593 & 0.922 & $\times$ & $\times$ & $\times$ \\
			Struc2vec & 0.92 & 0.8309 & 0.9006 & 0.9167 & 0.8215 & 0.96 & $\times$ & $\times$ & $\times$ \\
			Role2vec & 0.8653 & 0.9753 & 0.7979 & 0.7398 & 0.9386 & 0.8066 & $\times$ & $\times$ & $\times$ \\
			LINE & 0.9259 & 0.9875 & 0.8826 & 0.8628 & 0.9448 & 0.947 & $\times$ & $\times$ & \textbf{0.9026} \\
			SDNE & 0.9695 & 0.9647 & 0.8885 & 0.9147 & 0.9066 & 0.9382 & $\times$ & $\times$ & $\times$ \\
			CNE & 0.8227 & 0.9082 & 0.8485 & 0.8611 & 0.8216 & 0.9193 & $\times$ & $\times$ & $\times$ \\
			AROPE & \textbf{0.9774} & 0.9863 & 0.899 & 0.9112 & 0.9191 & 0.9617 & \textbf{0.9825} & 0.9103 & 0.8757 \\ \hline
			MaxEnt ($k=5$) & 0.9542 & 0.8853 & 0.9018 & 0.9179 & 0.8365 & \textbf{0.9644} & 0.9699 & \textbf{0.9919} & 0.8863 \\
			MaxEnt ($k=100$) & 0.9626 & 0.9406 & 0.9026 & 0.9178 & 0.8752 & 0.9638 & 0.9699 & 0.9669 & 0.8857 \\
			MaxEnt (full) & 0.9604 & 0.9694 & \textbf{0.9097} & \textbf{0.9182} & 0.9342 & $\times$ & $\times$ & $\times$ & $\times$ \\
			\bottomrule
	\end{tabular}}
\end{table*}

%%%%%%%%%%%%%%%%%
\subsection{Link Prediction}
\label{sec:lp}
%%%%%%%%%%%%%%%%%
In link prediction the aim is to identify missing links from a given network. In this task, we randomly remove 50\% of the edges such that the remaining network is still connected. The reduced network is used for training, the removed edges are used for testing.
We compare with eight state-of-the-art network embedding methods (Deepwalk \cite{perozzi2014deepwalk}, Node2vec \cite{grover2016node2vec}, Struc2vec \cite{ribeiro2017struc2vec}, Role2vec \cite{Ahmed18Role2vec}, LINE \cite{tang2015line}, SDNE \cite{wang2016sdne}, CNE \cite{kang2018cne} and AROPE \cite{Zhang2018ArbitraryOrderPP}) and five common heuristics: common neighbours (CN), Adamic-Adar index (AA), Jaccard coefficient (JC), preferential attachment (PA) and Resource Allocation Index (RAI).
Method descriptions, parameter tuning and further details on the experimental setup are discussed in Appendix B.

\ptitle{MaxEnt (full)} First, we evaluate an exact model according to (\ref{eq:maxentGeneral}) with constraints on node degrees, and global constraints on $\mathbf{F}_\text{CN}$, $\mathbf{F}_\text{RAI}$ and $\mathbf{F}_\text{PA}$. This model is denoted as MaxEnt (full) in Table~\ref{table:lp}.

\ptitle{MaxEnt (blocked)} Secondly, two block-approximated models where both $\overline{\mathbf{F}}_\text{CN}$ (Section~\ref{sec:hop}) and $\overline{\mathbf{F}}_\text{RAI}$ (Section~\ref{sec:rai}) are binned into $k \in \{5,100\}$ bins and with $d=128$.
The matrix $\overline{\mathbf{F}}_\text{PA}$ is naturally binned by the unique degrees (Section~\ref{sec:prefat}). These models are denoted as MaxEnt ($k=5$) and MaxEnt ($k=100$) in Table~\ref{table:lp}.

In Table~\ref{table:lp} we present the average Area Under the ROC Curve (AUC) over three experiment repeats with different train/test splits for all methods.
MaxEnt (full) performs well on all datasets and is never far from the optimal value achieved across all methods. As expected, the method does not scale to large networks.
The block-approximated models ran on all networks and display competitive results.
Figure~\ref{fig:time1} shows total execution times on the datasets where all methods succesfully terminated within time.
%The sets of train and test edges are topped with equal amounts of non-edges drawn uniformly at random from the original graph. The train set is further divided in 90\% train and 10\% validation for hyper-parameter tunning.
%The link prediction heuristics and embedding methods AROPE, CNE and MaxEnt provide node similarities which can be directly interpreted as probabilities of linking nodes. For the remaining methods we apply Logistic Regression with 5-fold cross validation on the edge embeddings to obtain link predictions.

%We use $\times$ to denote that a method did not terminate execution within a pre-set time of 4 hours.

%We report that we also fitted a non block-approximated MaxEnt model by replacing the Jaccard feature matrix with the matrix $\mathbf{F} = \mathbf{\hat{A}}^3$, but this model did not work well for link prediction, most likely because of label leakage by the $\mathbf{\hat{A}}^3$ feature matrix (Section~\ref{sec:featureleakage}).
%The link prediction heuristics, and in particular PA and RAI present very good performance on some networks while performing poorly in others.

\begin{figure}[tp]
	\captionsetup[subfloat]{farskip=5pt,captionskip=1pt}
	\centering
	{\scalebox{0.8}{\includegraphics[width=1\linewidth,trim=0cm 0cm 0cm 1.1cm,clip]{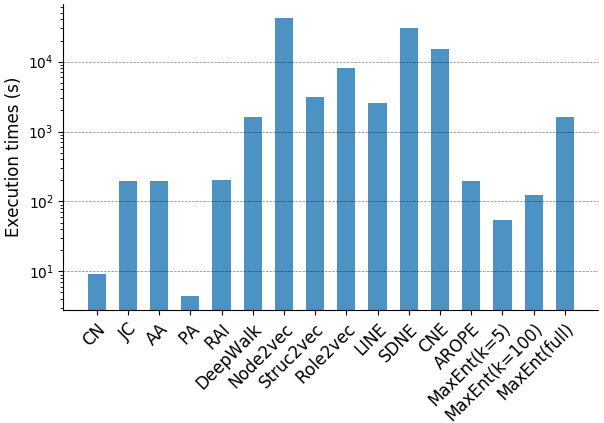}}}
	%\subfloat[Overall runtime.\label{fig:time:all}]{\scalebox{1.05}{\includegraphics[width=0.33\linewidth,trim=0cm 0cm 0cm 1.1cm,clip]{timing_Allsteps.png}}}
	%\subfloat[Runtime L-BFGS vs bins $k$.\label{fig:time:lbfgs}]{\scalebox{1}{\includegraphics[width=0.33\linewidth,trim=0cm 0cm 0cm 1.1cm,clip]{timing_binsLBFGS.eps}}}
	%\vspace*{-3mm}
	\caption{Total execution times for the link prediction task on the \emph{StudentDB}, \emph{Facebook}, \emph{PPI}, \emph{Wikipedia} and \emph{GR-QC} datasets.\label{fig:time1}}
	%\caption{Influence of the graphsize and number of bins $k$ on the runtime of fitting a MaxEnt model (\ref{eq:maxentGeneral}) with constraints on the degrees and a block-approximated $\mathbf{F}=\mathbf{\hat{A}}^2$ feature matrix.
	%	In (a) and (b) we fix $k=100$ and let the graphsize vary. In (c) and (d) we fix $n=10^5$ and let the binsize $k$ vary.\label{fig:time}}
\end{figure}

\begin{figure*}[tp]
	\captionsetup[subfloat]{farskip=5pt,captionskip=1pt}
	\centering
	\subfloat{\scalebox{0.8}{\includegraphics[width=0.28\linewidth,trim=0cm 0cm 0cm 1cm,clip]{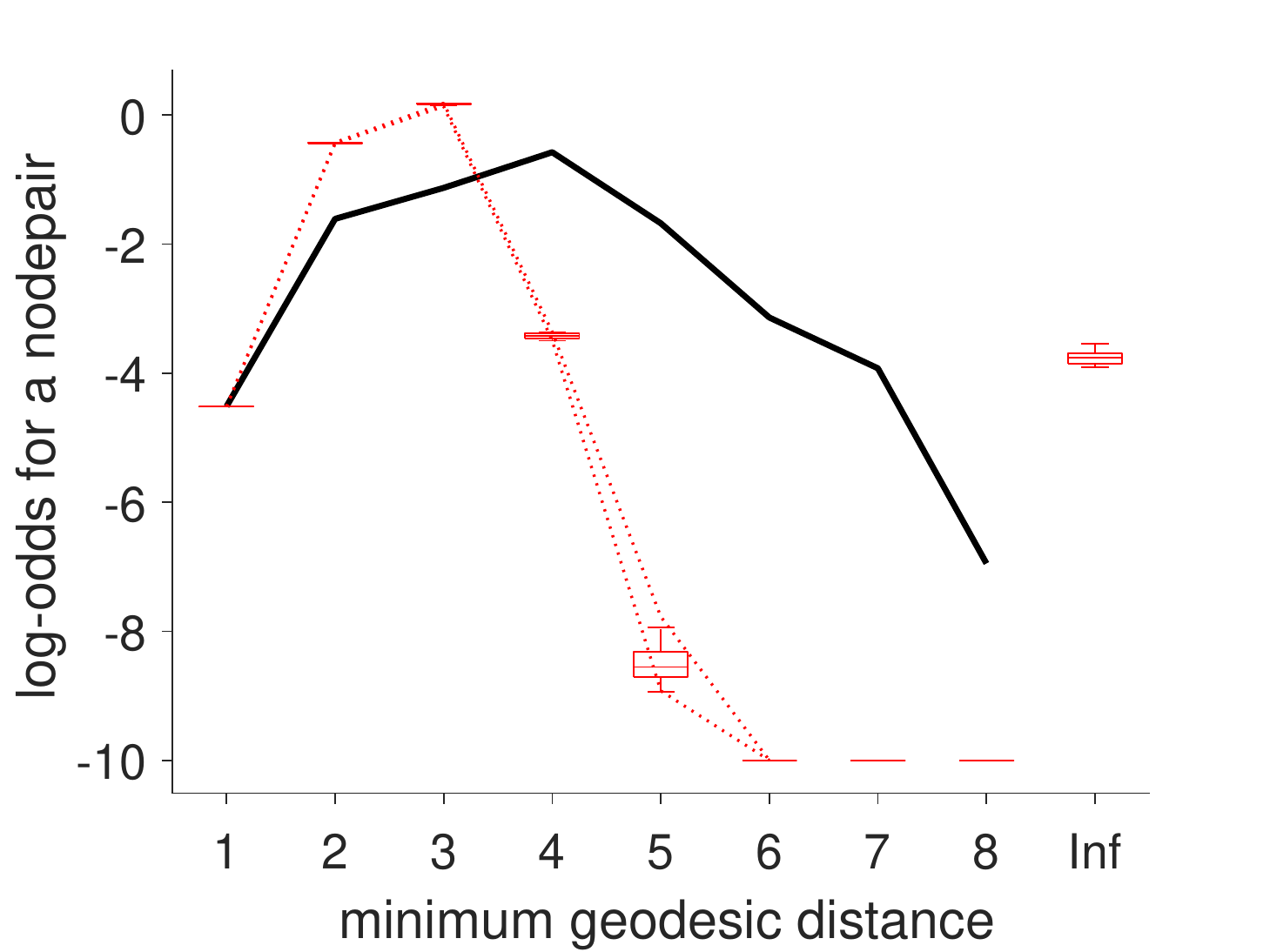}}}
	\subfloat{\scalebox{0.8}{\includegraphics[width=0.28\linewidth,trim=0cm 0cm 0cm 1cm,clip]{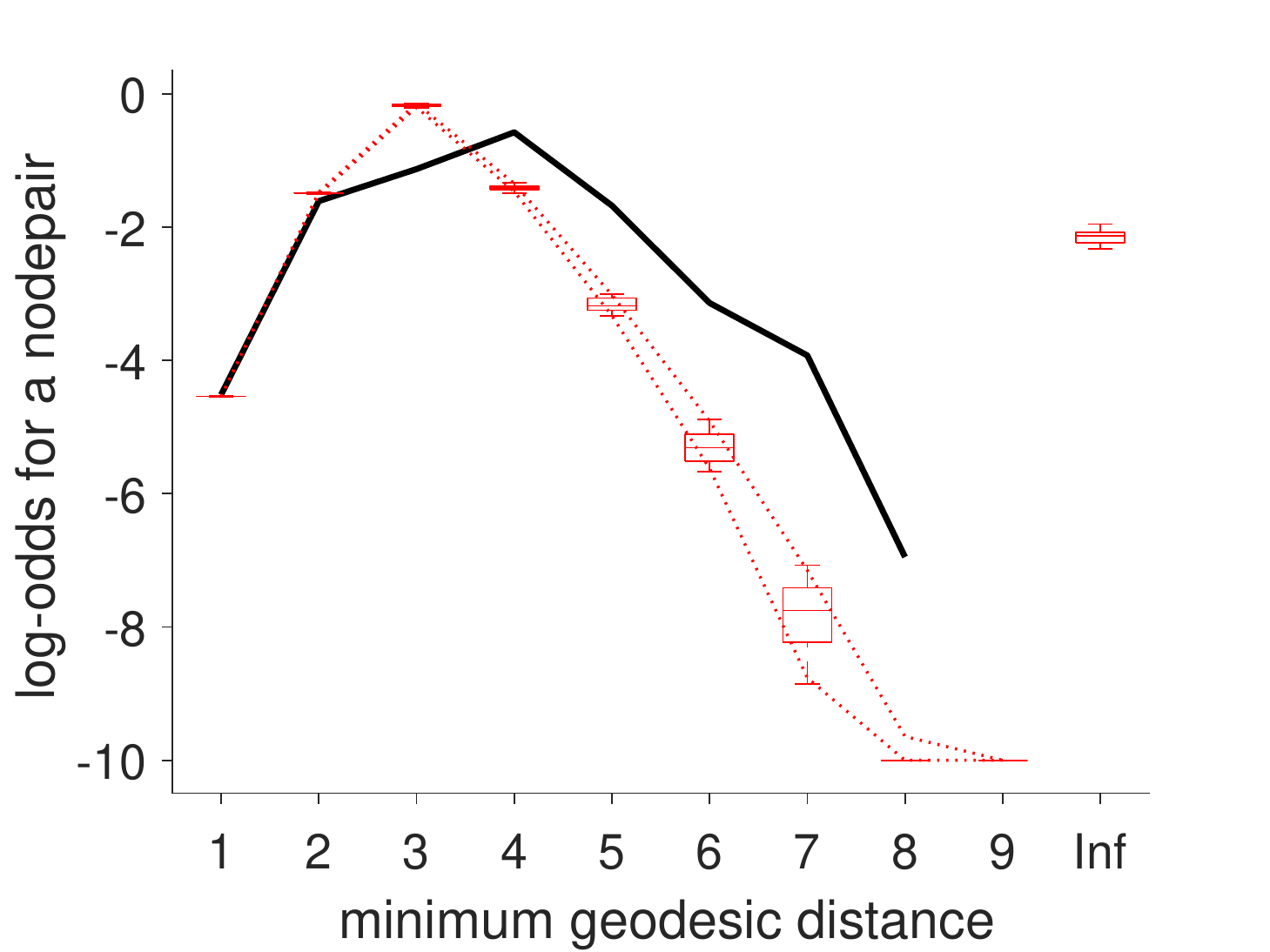}}}
	\subfloat{\scalebox{0.8}{\includegraphics[width=0.28\linewidth,trim=0cm 0cm 0cm 1cm,clip]{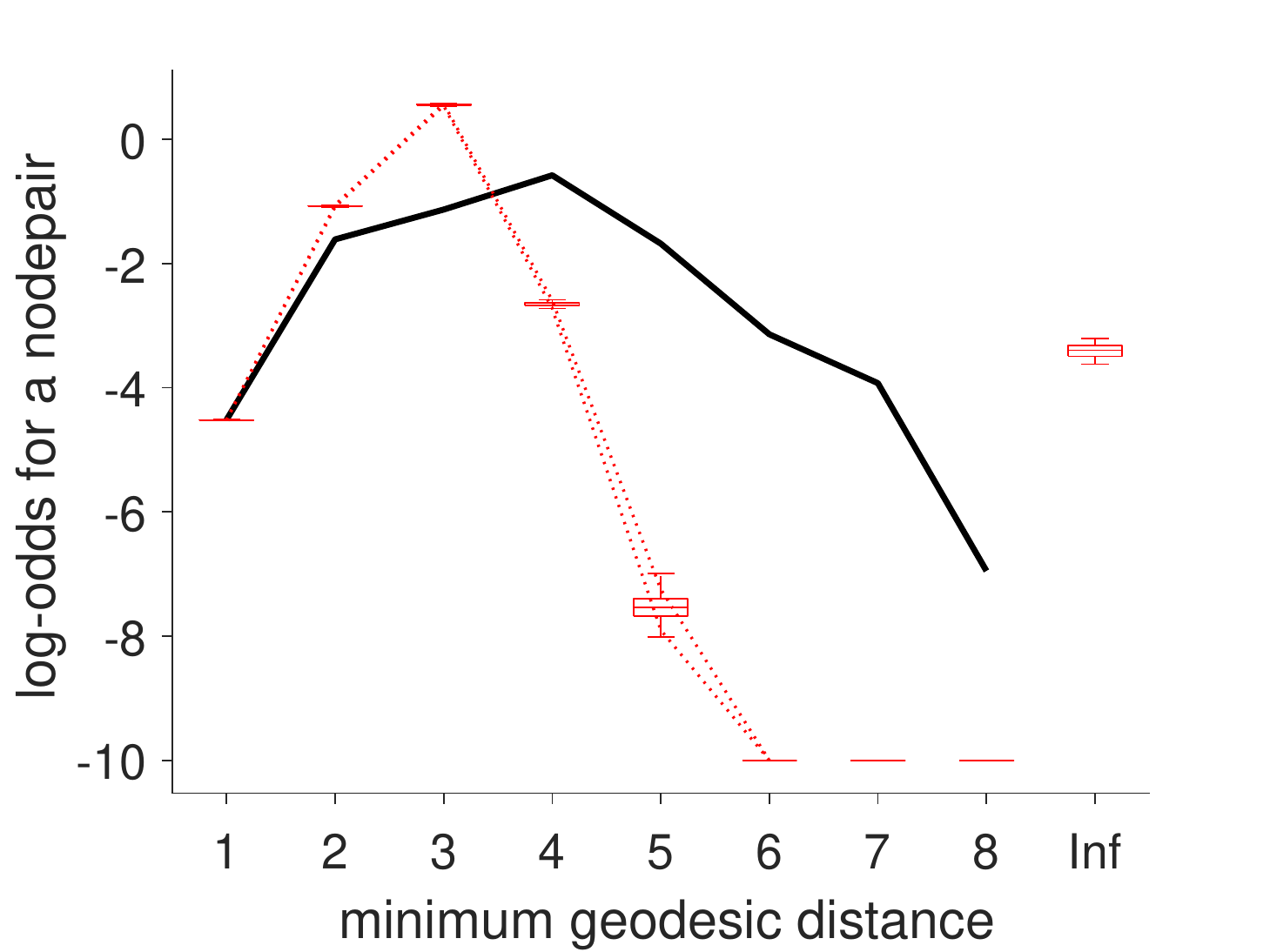}}}
	\subfloat{\scalebox{0.8}{\includegraphics[width=0.28\linewidth,trim=0cm 0cm 0cm 1cm,clip]{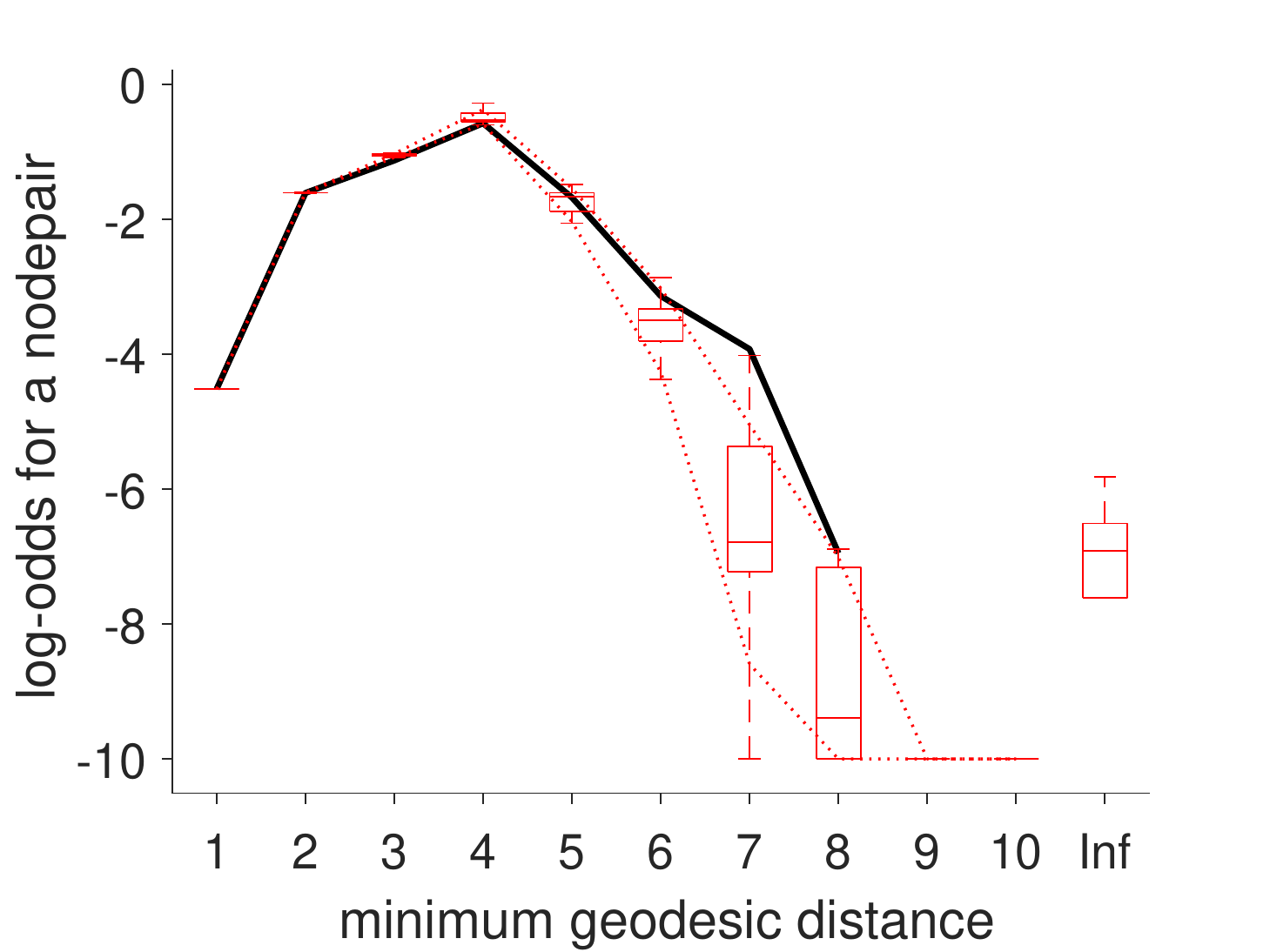}}}
	\\
	\subfloat{\scalebox{0.8}{\includegraphics[width=0.28\linewidth,trim=0cm 0cm 0cm 1.1cm,clip]{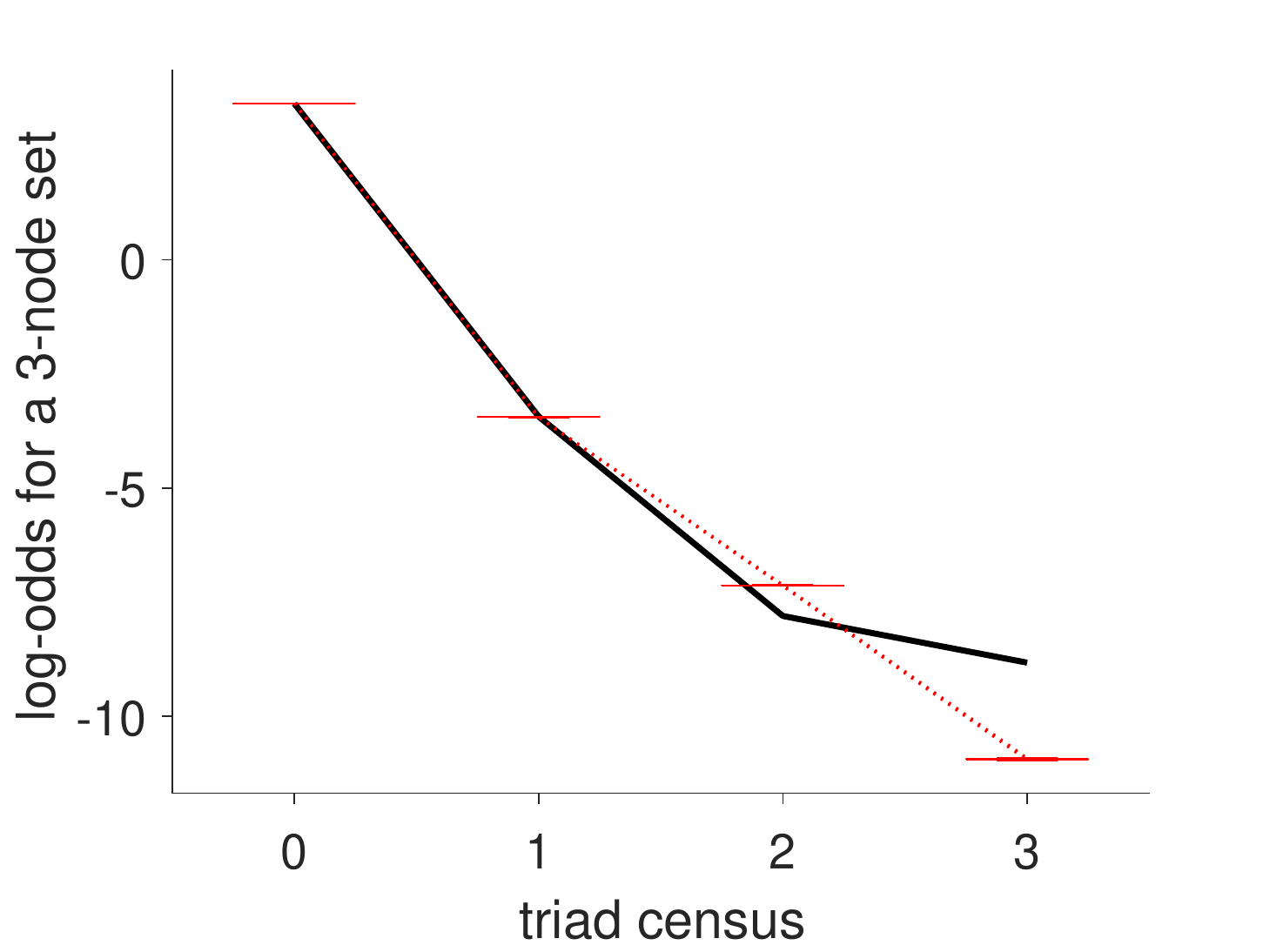}}}
	\subfloat{\scalebox{0.8}{\includegraphics[width=0.28\linewidth,trim=0cm 0cm 0cm 1.1cm,clip]{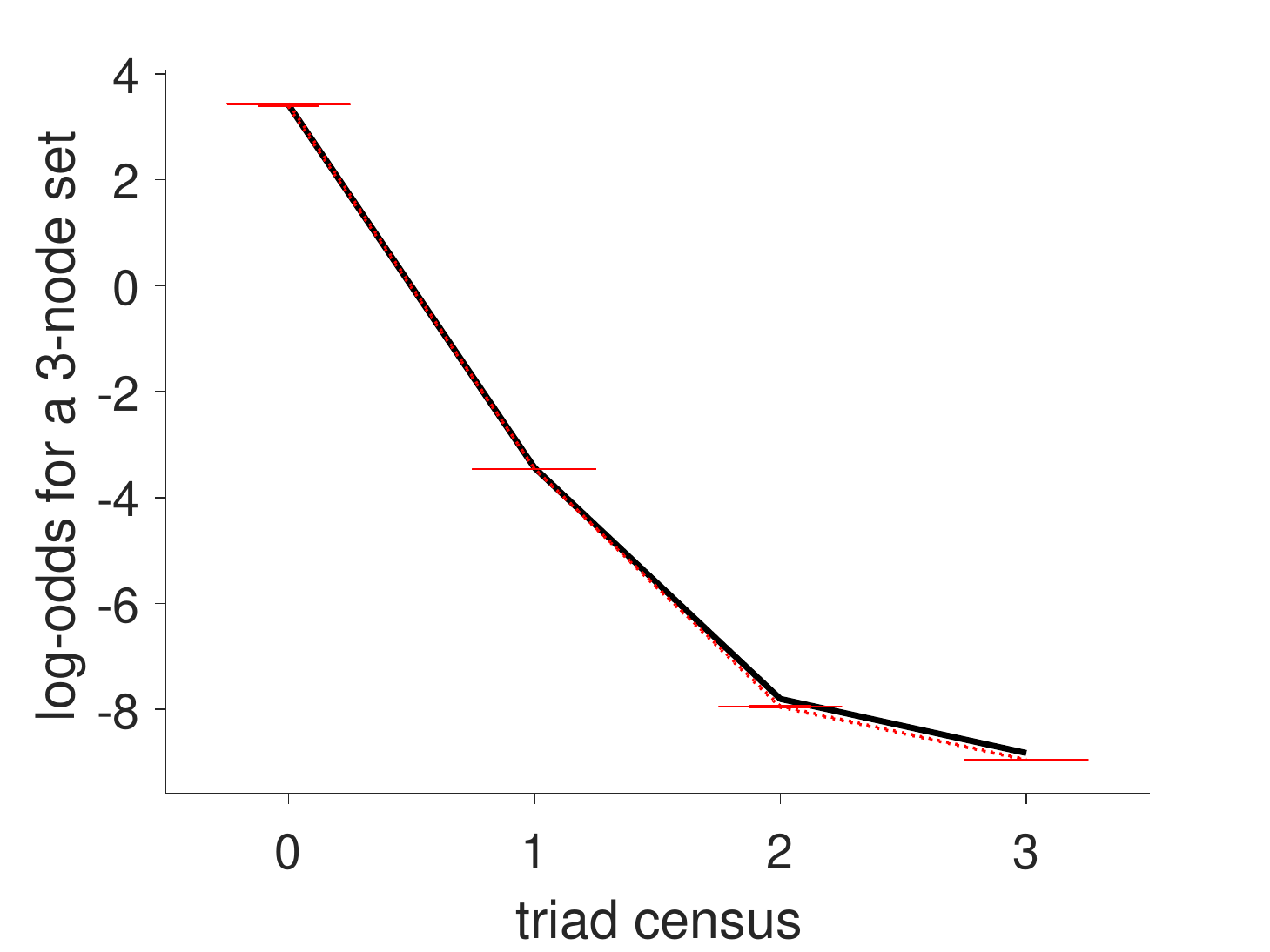}}}
	\subfloat{\scalebox{0.8}{\includegraphics[width=0.28\linewidth,trim=0cm 0cm 0cm 1.1cm,clip]{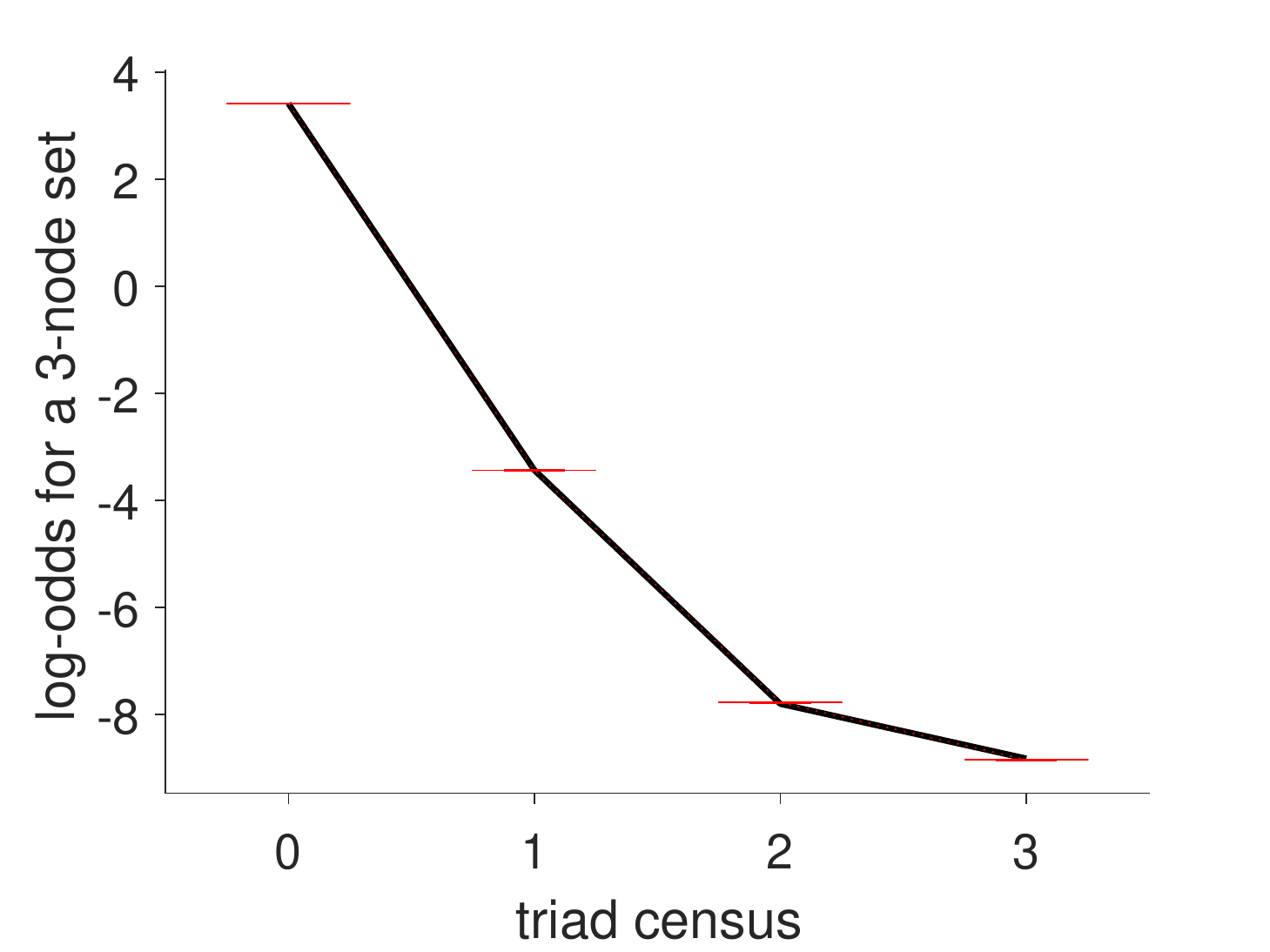}}}
	\subfloat{\scalebox{0.8}{\includegraphics[width=0.28\linewidth,trim=0cm 0cm 0cm 1.1cm,clip]{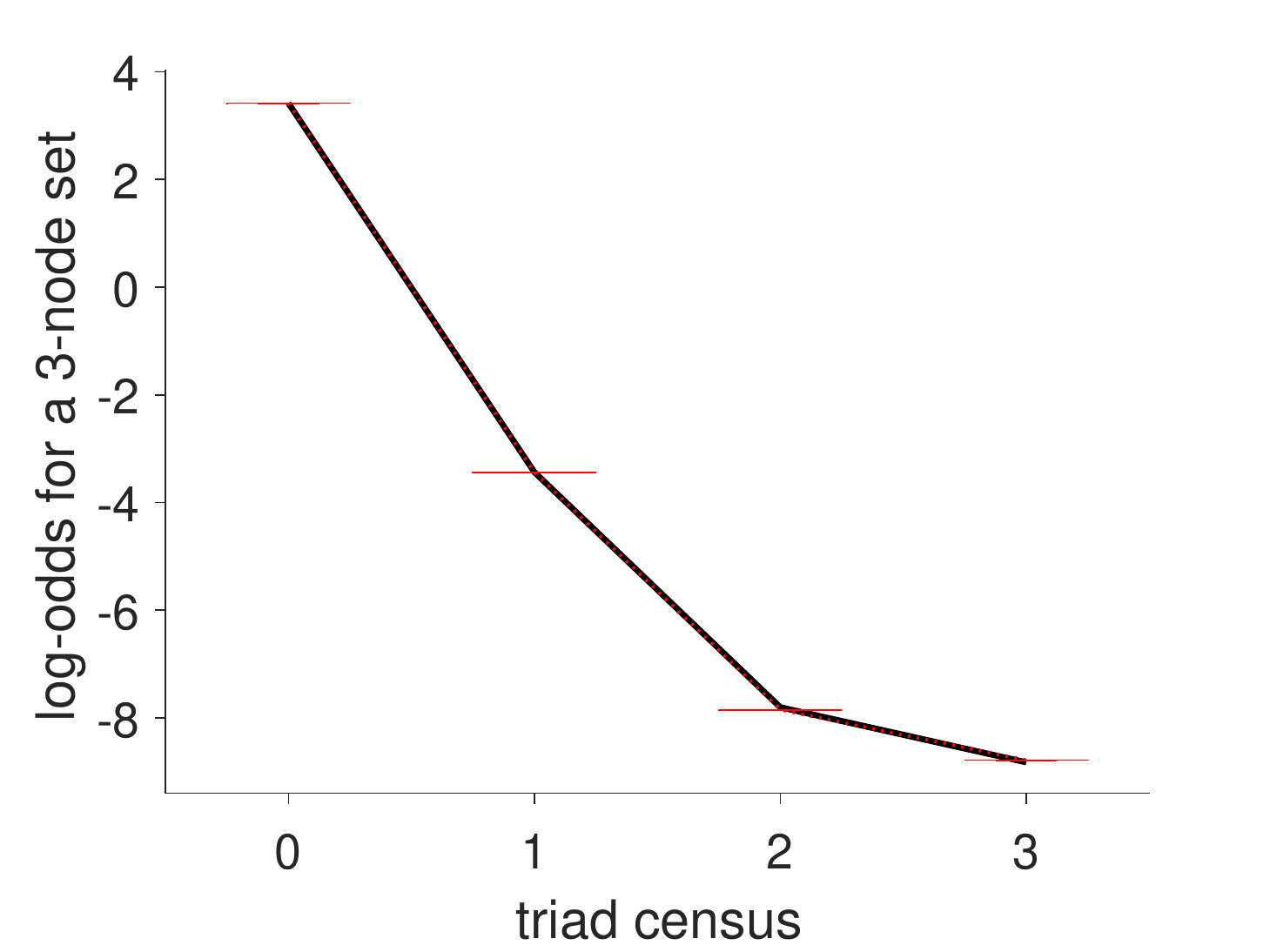}}}
	\\
	\renewcommand{\thesubfigure}{a}
	\subfloat[\small Chung-Lu model\label{fig:cl}]{\scalebox{0.75}{\includegraphics[width=0.28\linewidth,trim=0cm 0cm 0cm 0.5cm,clip]{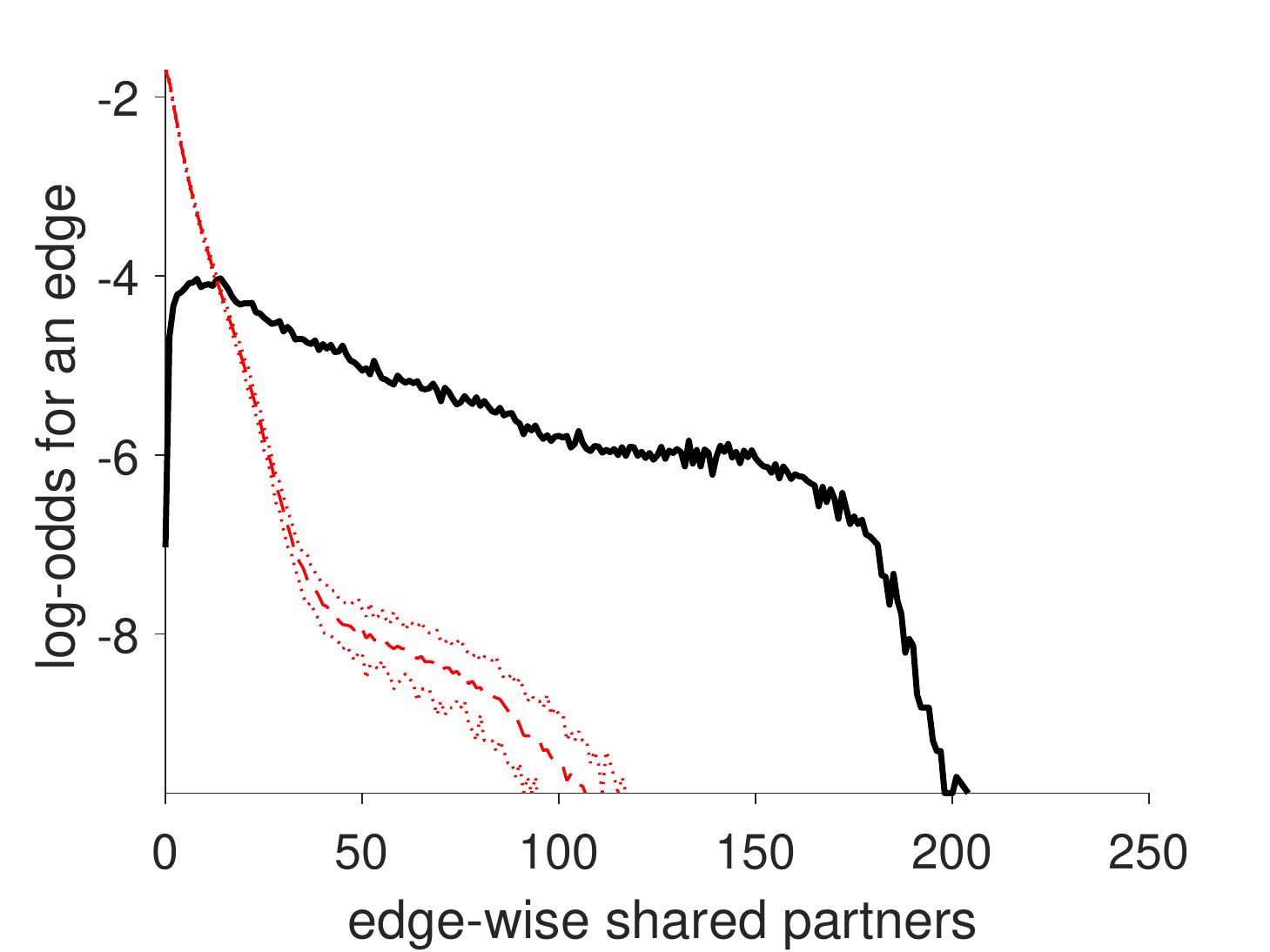}}}
	\renewcommand{\thesubfigure}{b}
	\subfloat[BTER model]{\scalebox{0.75}{\includegraphics[width=0.28\linewidth,trim=0cm 0cm 0cm 0.5cm,clip]{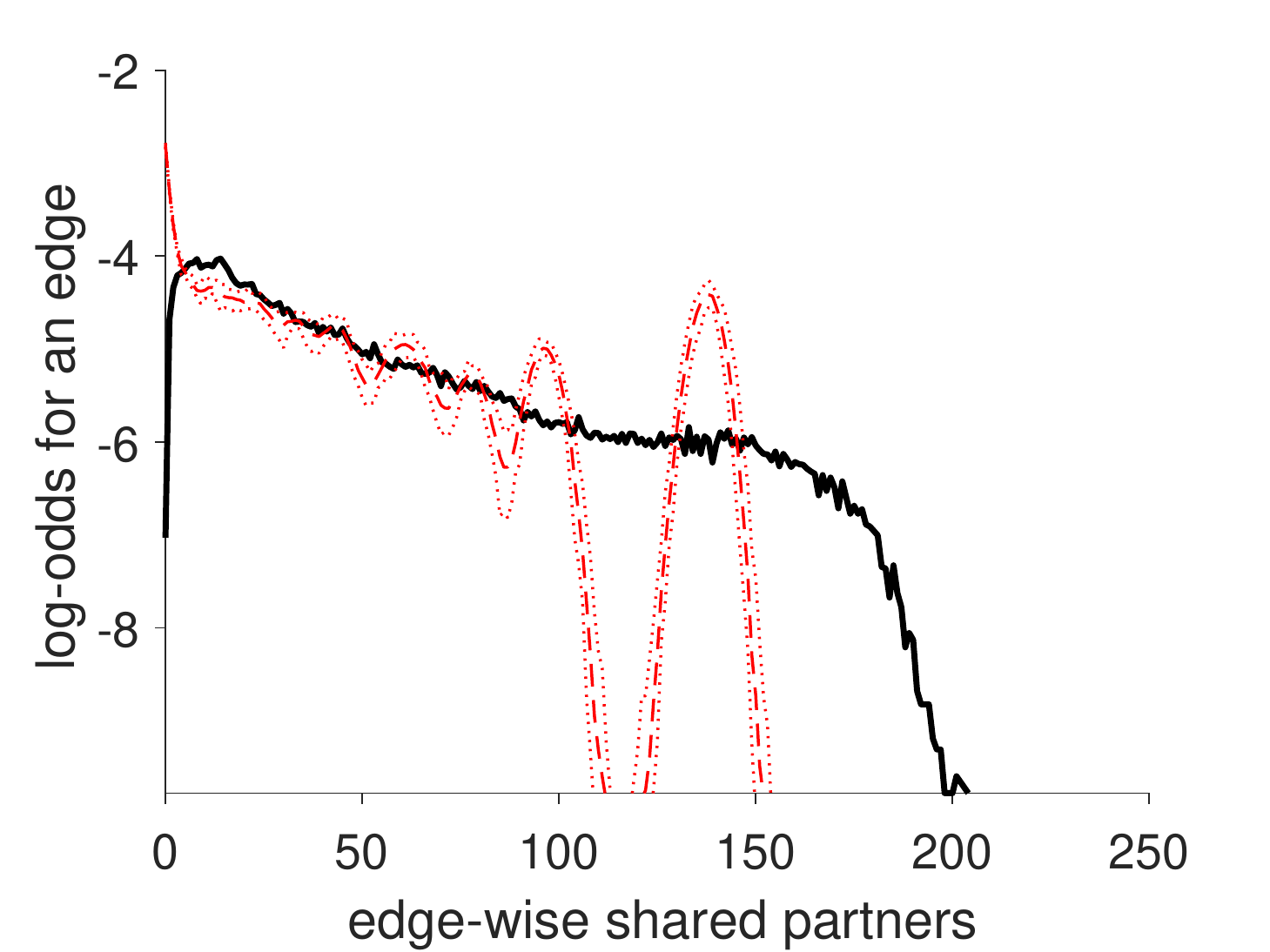}}}
	\renewcommand{\thesubfigure}{c}
	\subfloat[ME with $\overline{\mathbf{F}}_\text{PA}$ and $\overline{\mathbf{F}}_\text{RAI}$]{\scalebox{0.75}{\includegraphics[width=0.28\linewidth,trim=0cm 0cm 0cm 0.5cm,clip]{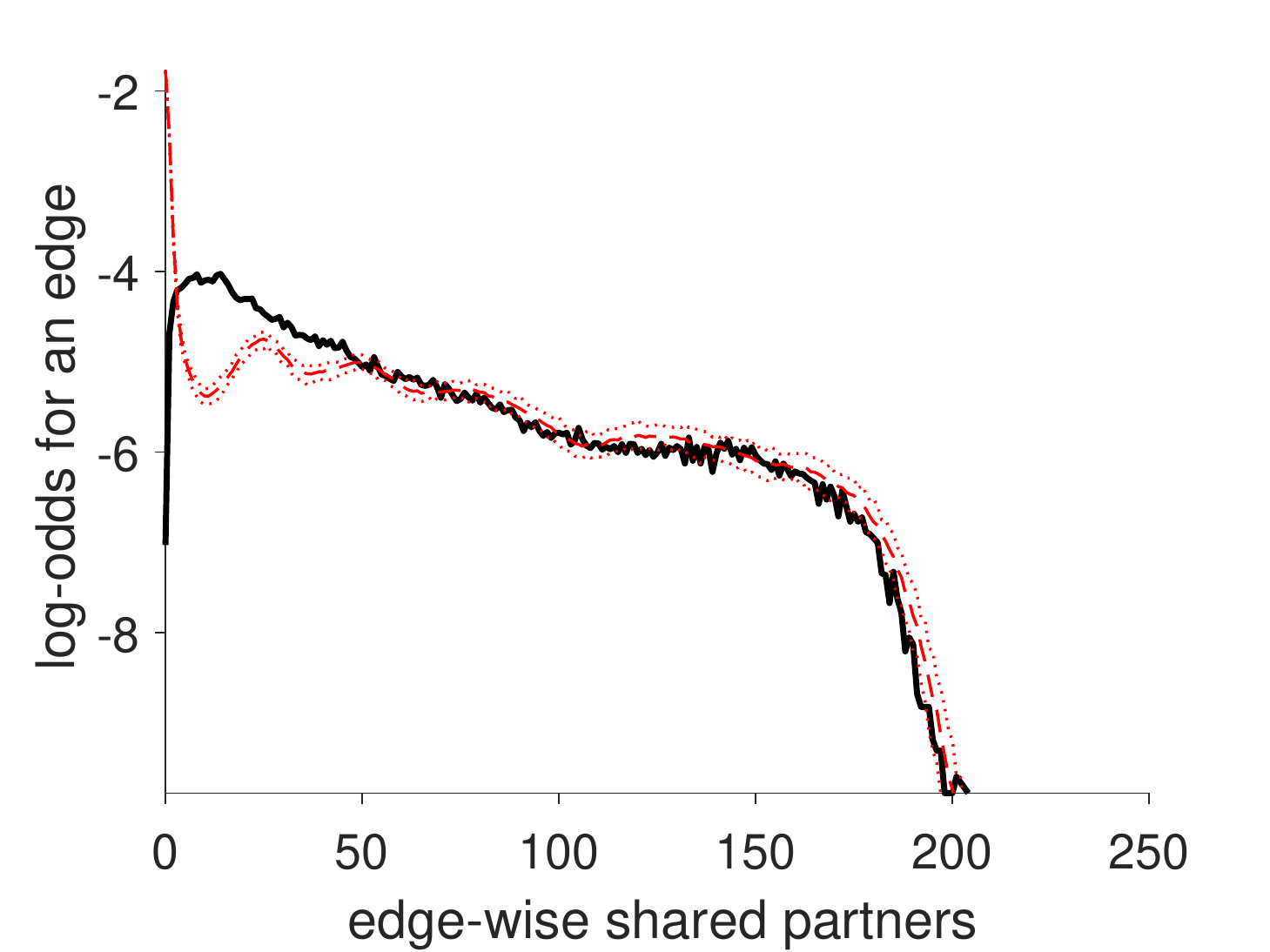}}}
	\renewcommand{\thesubfigure}{d}
	\subfloat[ME with $\mathbf{F} = \mathbf{\hat{A}}+\beta\mathbf{\hat{A}}^2$]{\scalebox{0.75}{\includegraphics[width=0.28\linewidth,trim=0cm 0cm 0cm 0.5cm,clip]{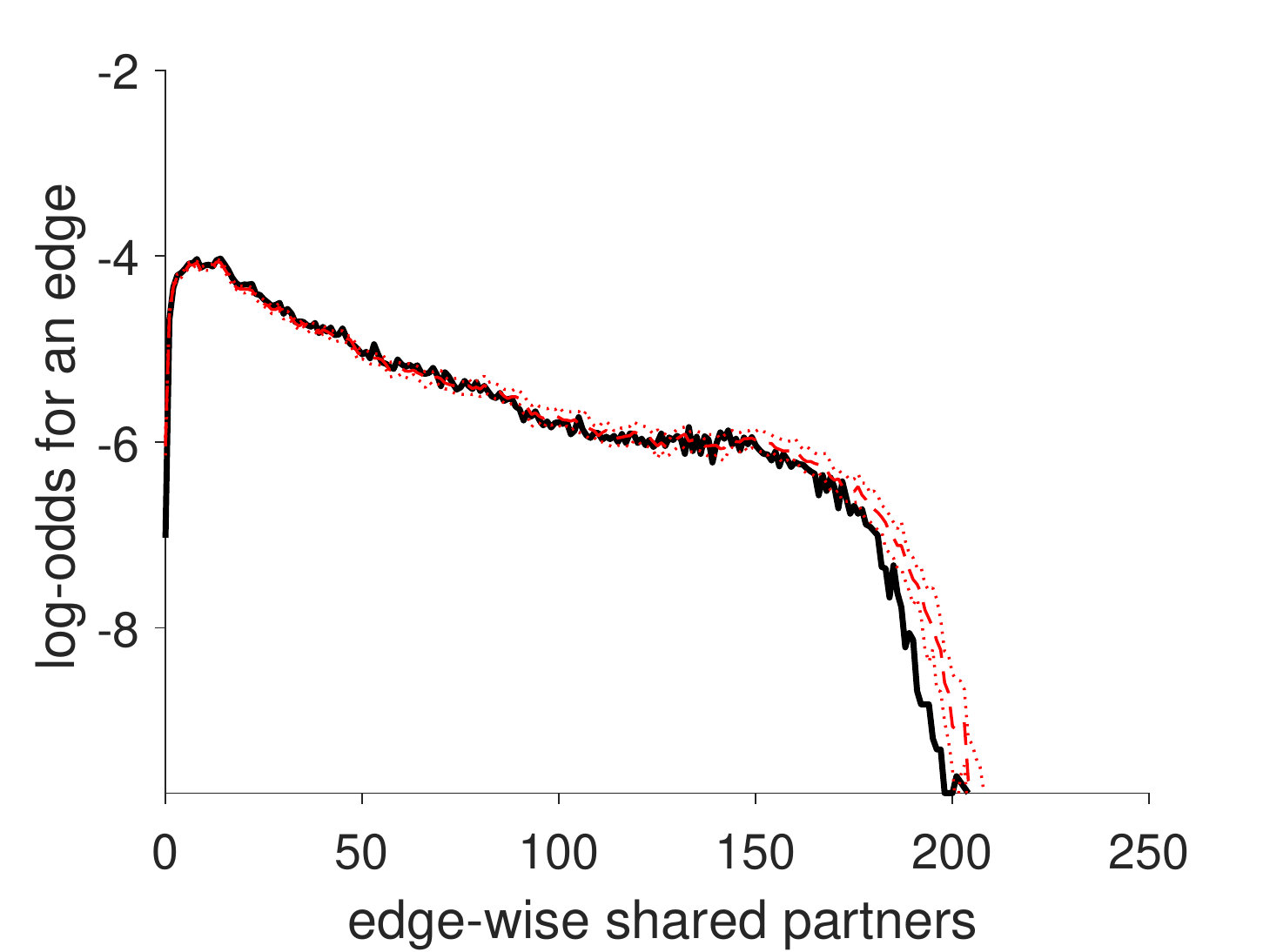}}}
	%\vspace*{-2mm}
	\caption{Goodness of fit plots on the \emph{Facebook} dataset for three statistics. The thick black lines are empirical distributions.
		The red lines are measurements from 50 randomly generated networks, according to four different independent edge models: (a) Chung-Lu (b) BTER model (c) MaxEnt with degrees and $\overline{\mathbf{F}}_\text{PA}$ and $\overline{\mathbf{F}}_\text{RAI}$ (block-approximated with $d=20$ and $k=100$) d) MaxEnt with degrees, and a constraint on $\mathbf{F} = \mathbf{\hat{A}}+\beta\mathbf{\hat{A}}^2$, with $\beta = 0.025$. Thin dotted red lines indicate 90\% confidence intervals.\label{fig:facebook}}
\end{figure*}

%%%%%%%%%%%%%%%%
\subsection{Goodness-of-fit on a social network}
\label{sec:gof}
%%%%%%%%%%%%%%%%
A standard visual approach for ERG model evaluation is to plot goodness-of-fits \cite{Goodreau,HunterGoodness}.
The idea is to compare a set of higher-order network statistics, preferably statistics that are not directly modeled, with a range of the same statistics obtained by simulating random graphs from the model.
We test on the \emph{Facebook} dataset, which is essentially a combination of social circles (communities), combined with individual ego-networks. We test on three statistics that have been previously proposed to evaluate social network models \cite{HunterGoodness}:

\squishlist
\item \textbf{Minimum geodesic distance}; the shortest path distances distribution between all pairs of nodes in the network. Unreachable node pairs get assigned the value `Inf'.
\item \textbf{Triad census}; the distribution of the number of edges between a set of three nodes.
\item \textbf{Edgewise shared partners}; let $\text{CN}_i$ denote the number of edges in the network that have exactly $0 \leq i \leq n-2$ common neighbors. This defines a distribution $\text{CN}_0/m, \text{CN}_1/m,\ldots, \text{CN}_{n-2}/m$.
\squishend

\begin{figure*}[tp]
\captionsetup[subfloat]{farskip=5pt,captionskip=1pt}
\centering
\subfloat[Runtime of different optimizers.\label{fig:time:opt}]{\scalebox{1}{\includegraphics[width=0.25\linewidth,trim=0cm 0cm 0cm 1.1cm,clip]{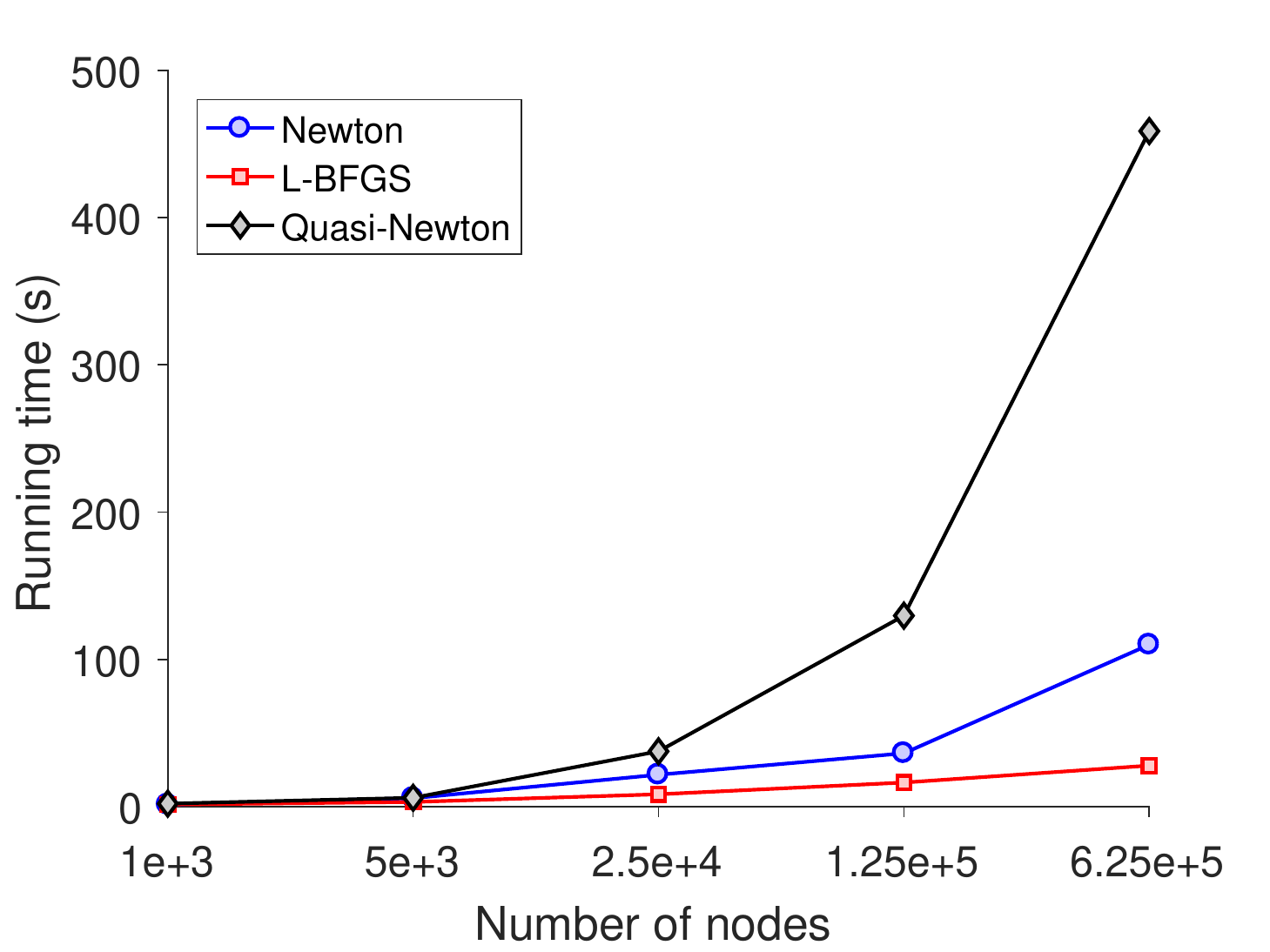}}}
\subfloat[Overall runtime.\label{fig:time:all}]{\scalebox{1.05}{\includegraphics[width=0.25\linewidth,trim=0cm 0cm 0cm 1.1cm,clip]{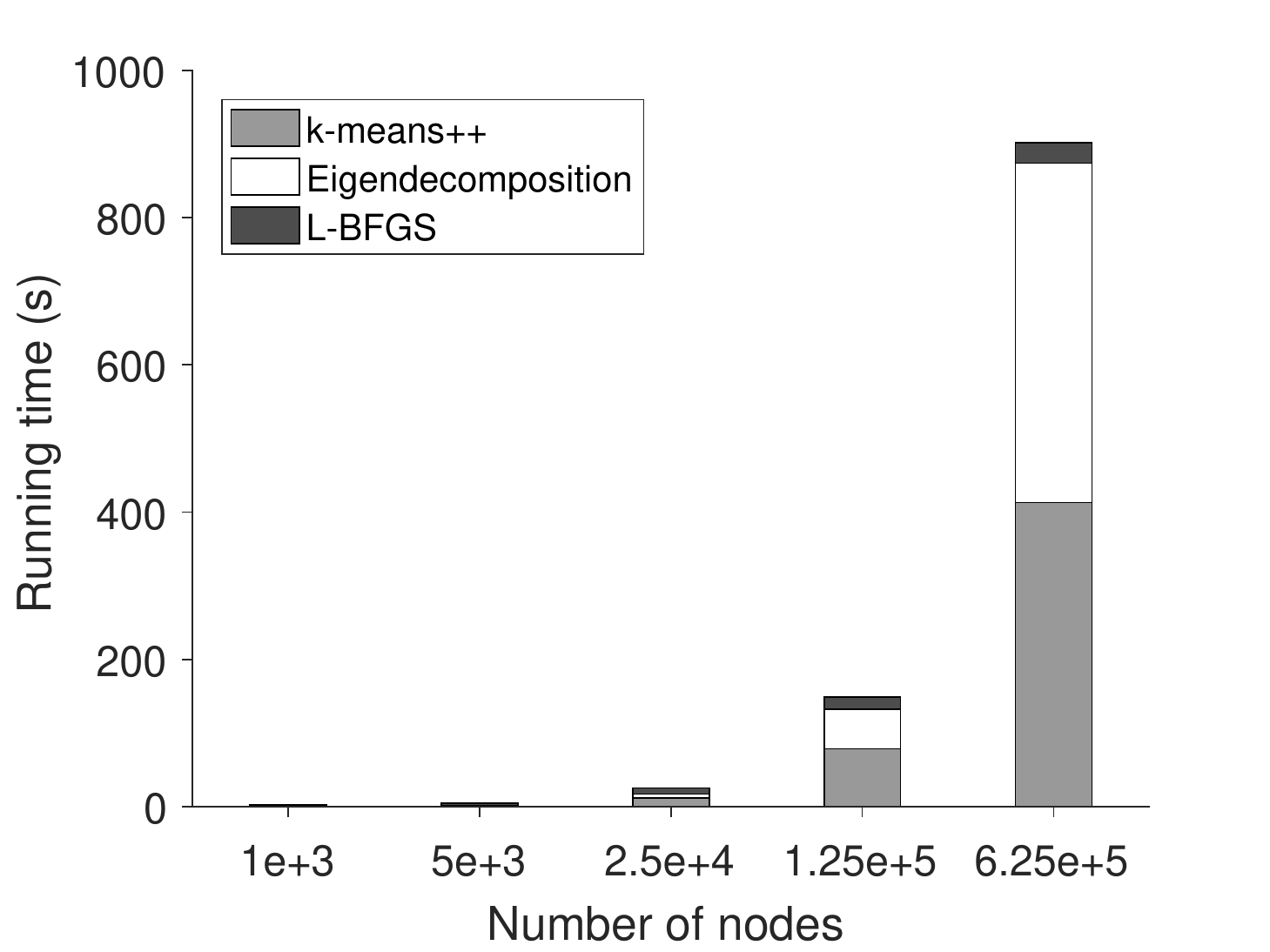}}}
\subfloat[Number of unique degrees.\label{fig:time:deg}]{\scalebox{1}{\includegraphics[width=0.25\linewidth,trim=0cm 0cm 0cm 1.1cm,clip]{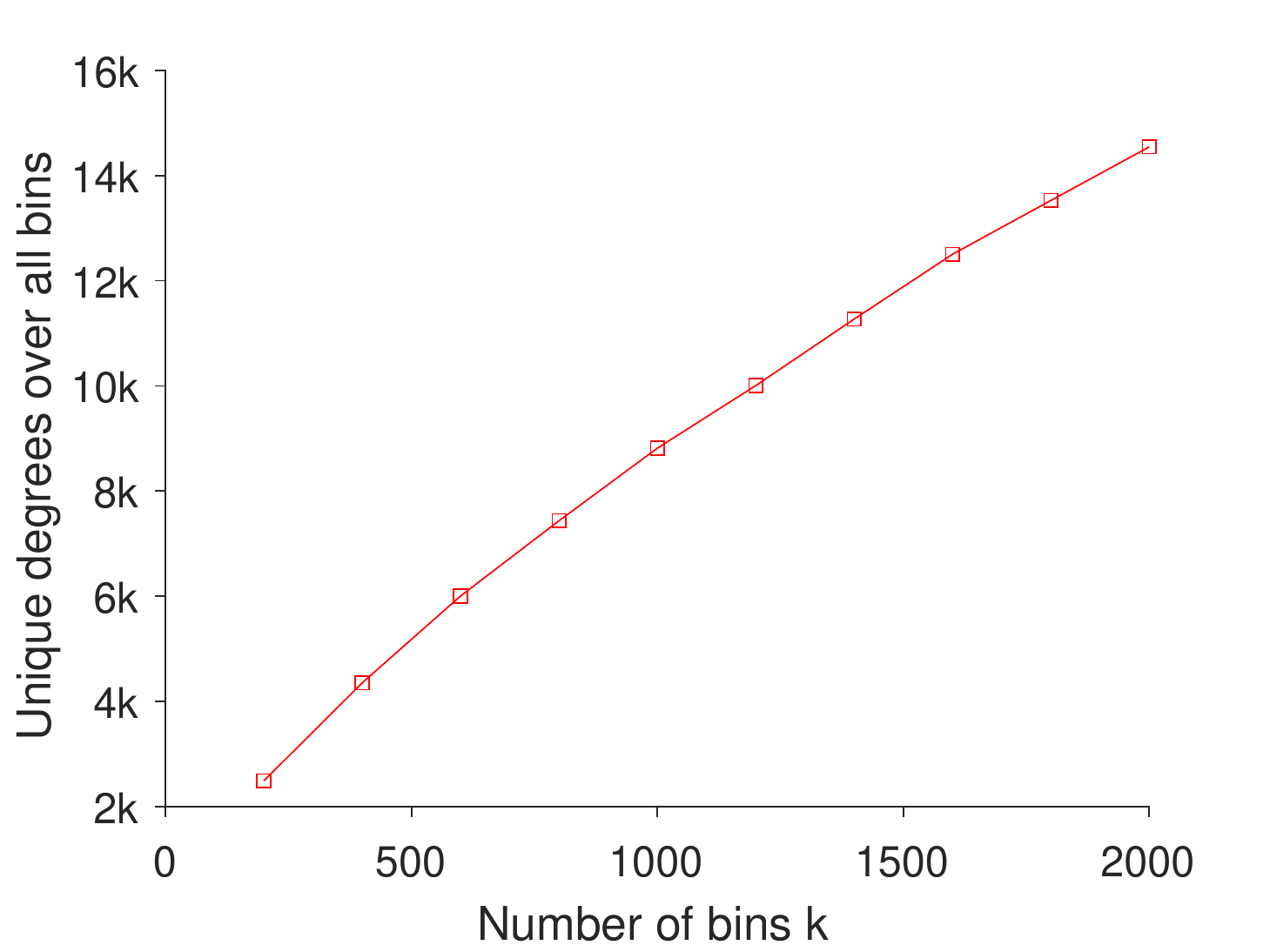}}}
\subfloat[Runtime L-BFGS vs bins $k$.\label{fig:time:lbfgs}]{\scalebox{1}{\includegraphics[width=0.25\linewidth,trim=0cm 0cm 0cm 1.1cm,clip]{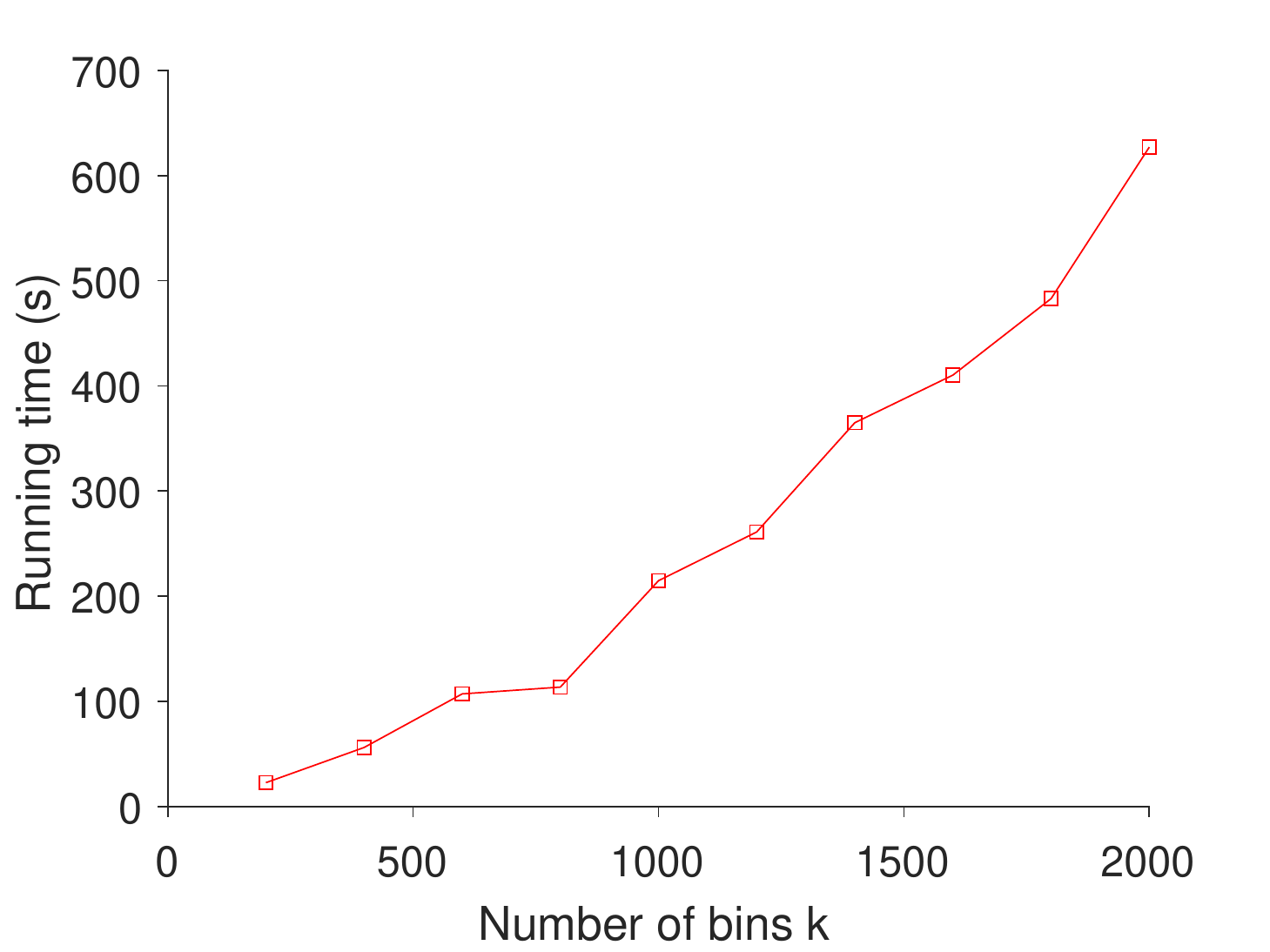}}}
\caption{Influence of graph size $n$ and bins $k$ on fitting times (norm gradient $<10^{-3}$) of a MaxEnt model with constraints on degrees and block-approximated $\overline{\mathbf{F}}_\text{CN}$.
In (a) and (b) we fix $k=100$ and let the graph size vary. In (c) and (d) we fix $n=10^5$ and let the bin size $k$ vary.\label{fig:time}}
\end{figure*} 

We fit an approximative MaxEnt model, with constraints on node degrees, block-approximated $\overline{\mathbf{F}}_\text{PA}$ (unique degrees) and $\overline{\mathbf{F}}_\text{RAI}$ ($d=20$, $k=100$).
Secondly, we fit an exact MaxEnt model with degree constraints and a polynomial constraint $\mathbf{F} = \mathbf{\hat{A}}+\beta\mathbf{\hat{A}}^2$, with $\beta=0.025$.
%The latter model reflects a model that is very close to the original network (Section~\ref{sec:featureleakage}).
We compare with two other independent edge models.
The Chung-Lu model (CL) \cite{CL} assigns a probability of $d_id_j/\textstyle\sum_kd_k$ to each possible node pair $(i,j)$, where the probabilities are clipped to one if $d_id_j > \textstyle\sum_kd_k$.
Since the Chung-Lu model does not take into account other properties besides degrees, we do not expect this model to perform well on any of the aforementioned statistics.

To draw a more fair comparison, we compare with a model that takes community structure into account: the Block Two-Level Erdos-Renyi (BTER) model \cite{Kolda_2014,PhysRevE.85.056109}.
The BTER model is a well-known scalable generative network model that can be tuned to capture and preserve two fundamental properties: degree distribution and (local) clustering coefficients.

Figure~\ref{fig:facebook} shows the goodness-of-fit plots.
The Chung-Lu model fails to capture any form of community structure---which the dataset most surely has---resulting in poor performance on all statistics.
In contrast, both MaxEnt models with global constraints capture the triad census and edgewise shared partners distributions quite well.
The geodesic distances are less well captured by the block-approximated MaxEnt model.
The block-approximated MaxEnt model captures the community structure, but it still assigns a decent amount of probability mass to inter-community edges, reducing distances between nodes in different communities.

The BTER model performs well on the geodesic dinstances and the triad census, but interestingly, it shows adversial oscillatory behavior on the edgewise shared partners distribution.
Oscillatory behavior of statistics has been observed before in other generative models, e.g. it is known that Stochastic Kronecker Graphs have oscillating behavior of the degree distributions \cite{Seshadhri}.

The MaxEnt model fitted with the polymomial constraint is a very close fit to the original network, outperforming all other models (Figure~\ref{fig:facebook}d).
As such, it should be avoided for tasks like link prediction, but is useful to simulate networks that are close to the observed network.
We refer to Section~\ref{sec:featureleakage} for a discussion on label leakage.

%Both the CL model and the degree-only model have degree distributions that are excellent fits of the actual degree distribution.
%However, they fail to capture any form of community structure which the dataset most surely has. 

%%%%%%%%%%%%%%%%
\subsection{Detailed runtime testing}
\label{sec:runtime}
%%%%%%%%%%%%%%%%

\ptitle{Influence of the graph size}
We generate synthetic Erdos-Renyi random graphs with varying number of nodes $n \in \{1, 5, 25, 125, 625\}\cdot10^3$ with edges $m \approx 10n$. We test the influence of the graph size $n$ on scalability, and compare three optimization strategies for optimizing (\ref{eq:lagrangian}).
We solve a MaxEnt model with constraints on degrees and a block-approximated $\overline{\mathbf{F}}_\text{CN}$ ($d=20$ and $k=100$).
We compare three different optimizers using the minFunc optimization package \cite{minFunc}:
i) full Newton's method
ii) a diagonal quasi-Newton's method
iii) L-BFGS, a well-known and popular quasi-Newton method for parameter estimation in machine learning \cite{Liu1989}.
All optimizers use the same Wolfe line-search criteria to ensure global convergence.
Figure~\ref{fig:time:opt} shows the time needed to reach a norm gradient tolerance of $10^{-3}$, showing the superior performance of L-BFGS.
Figure~\ref{fig:time:all} shows the overall time needed to block approximate $\overline{\mathbf{F}}_\text{CN}$, and then fitting the reduced model with L-BFGS.
K-means++ was repeated five times per instance, and each try was given a 100 iterations.
The overall fitting time is mostly dominated by the eigendecomposition of $\mathbf{\hat{A}}$ and the k-means clustering.

\ptitle{Influence of the binning}
Secondly, we generate 10 Erdos-Renyi graphs with fixed graph size $n=10^5$ and $m \approx 10n$. The number of bins $k$ are varied according to $200\leq k \leq 2000$, in steps of 200.
Figure~\ref{fig:time:lbfgs} shows the average running time needed for L-BFGS to reach a gradient norm tolerance of $10^{-3}$.
Figure~\ref{fig:time:deg} shows the average sum of the number of unique degrees across all the bins, which is a measure of the number of variables that need to be optimized over (Section~3.3).
Theorem~1 shows that this number scales as $O(\sqrt{kn})$ and Figure~\ref{fig:time:deg} indeed provides slight evidence for this.

%%%%%%%%%%%%%%%%
\section{Related Work \& Concluding Discussion}
\label{sec:related}
%%%%%%%%%%%%%%%%
Although ERGs have a long history \cite{Strauss, NewmanComplexNetworks} and have been succesfully used as network models \cite{Goodreau, HunterGoodness, ERGmissingdata}, they usually suffer from limited scalability and degeneracy \cite{Handcock}.
%Informally speaking, degeneracy occurs when most of the probability mass is concentrated on a small subset of possible graphs, often near-empty of near-full graphs, and seems to occur when one is using classically specified ERGs known as Markov graphs.
A theoretical explanation for degeneracy is given by \cite{chatterjee2013} for dense graphs.
The authors of \cite{DERGMs} try to resolve degeneracy by limiting the support of an ERG.
Recently, \cite{FastEstimationByshkin} proposed slightly more scalable parameter estimates by exploiting properties of Markov chains at equilibrium, allowing for inference on a network with $10^5$ nodes.
Our paper circumvents both shortcomings by approximating dyadic independence models by means of block-approximated feature matrices, allowing for qualitative inference on sparse graphs with millions of nodes.
With sensible feature selection, degeneracy can be avoided (Section~\ref{sec:featureleakage}), and block-approximating typically results in additional probability smoothing.
%Frequentist MCMC methods are not needed, and standard convex optimization techniques can be used to optimize the model.

The authors from \cite{DuijnERG} argued that the properties of the pseudo likelihood for analyzing social networks are poorly understood.
In this paper, we have argued that dyadic independence models can be valuable by themselves.
A possible explanation of the strength of the proposed models is the combination of both local role-based similarities (degrees) \cite{Rossi2019FromCT}, with community information.
Leveraging other graphlets counts, e.g., by considering weighted common graphlet counts as features, is a promising avenue for further work. Recent methods \cite{Hone, EstimationHO} have been proposed to obtain approximative low-rank decompositions of such matrices.

This paper essentially provides a general framework on how to fit maximum entropy models with local and global constraints.
The main idea is to smoothen the global constraints into regions of the graph where they are roughly the same, and then solving a smaller optimization problem by noting that the local constraints have limited uniqueness in those regions.
Including for node- or edge attributed graphs and other types of graphs is an open avenue for further work.

%Another indication is the recent success of graphons \cite{borgs2014an,Choigraphons}, which are essentially dyadic independence models for dense graphs.
%A further theoretical investigation of the relationship with graphons is a second interesting avenue for further work.

\section*{Acknowledgments}
The research leading to these results has received funding from the European Research Council under the European Union's Seventh Framework Programme (FP7/2007-2013) / ERC Grant Agreement no. 615517, from the Flemish Government under the ``Onderzoeksprogramma Artificiële Intelligentie (AI) Vlaanderen'' programme, and from the FWO (project no. G091017N, G0F9816N, 3G042220).

\bibliography{MAXENT_DSAA}
\bibliographystyle{ieeetr}

\clearpage

\appendix

\ptitle{Appendix A [Model derivation]}

When considering the Lagrangian of the optimization problem (\ref{eq:maxentGeneral}), and setting derivatives with respect to $P(\textbf{A})$ equal to zero, one obtains the following form for $P(\textbf{A})$:
\begin{align}
\label{formpseudo}
P(\textbf{A}) = \dfrac{\exp(\textstyle\sum_{l}\lambda_l \textstyle\sum_{i,j}f^l_{ij}\textbf{A}_{ij})}{Z} = \dfrac{\prod_{i,j}\exp(\textstyle\sum_{l}\lambda_lf^l_{ij}\textbf{A}_{ij})}{Z},
\end{align}
where $Z=Z(\lambda_1,\ldots,\lambda_M)$ is the partition function (i.e., the normalizing constant). Evaluating $Z$ yields
\begin{align*}
Z &= \textstyle\sum_{\textbf{A} \in \{0,1\}^{n\times n}} \prod_{i,j}\exp(\textstyle\sum_{l}\lambda_lf^l_{ij}\textbf{A}_{ij}) \\
&= \textstyle \prod_{i,j}\sum_{\textbf{A}_{ij} \in \{0,1\}} \exp(\textstyle\sum_{l}\lambda_lf^l_{ij}\textbf{A}_{ij}) \\
&= \textstyle \prod_{i,j}\big(1+\exp(\textstyle\sum_{l}\lambda_lf^l_{ij})\big).  
\end{align*}
Hence $P(\textbf{A})$ factorizes as a product of independent Bernoulli distributions.

On the other hand, the pseudolikelihood (see e.g., \cite{DuijnERG}) of an ERG (\ref{eq:ergintro}) with parameters $\theta$ and related statistics $s$ is equal to 
\begin{align*}
P_{\text{pseudo}}(\textbf{A}) =  \prod_{i,j}\dfrac{\exp(\textstyle\sum_{l}\theta_l\Delta s^l_{ij}\textbf{A}_{ij})}{1+\exp(\textstyle\sum_{l}\theta_l\Delta s^l_{ij})},
\end{align*} 
where $\Delta s^l_{ij}$ denotes the change in the statistic $s^l$ when going from a realization of $\textbf{A}$ without the edge $(i,j)$ to a realization of $\textbf{A}$ that includes the edge $(i,j)$.
Hence $P_{\text{pseudo}}$ is identical to (\ref{formpseudo}), when the $f^l_{ij}$ features are chosen to be equal to $\Delta s^l_{ij}$, as will be the case for statistics that are \emph{counts} of certain graph related properties (degrees, triangles, etc.).
For example, consider the out-degree of node $i$ as a statistic. Then $\Delta s^l_{kj} = 1$ when $k=i$ and $j \neq i$, and $\Delta s^l_{kj} = 0$ elsewhere.
This is exactly equal to the $f^l_{ij}$ degree-features defined in Section~\ref{sec:genmodel}.

\ptitle{Appendix B [Experimental setup link prediction]}
The sets of train and test edges are topped with equal amounts of non-edges drawn uniformly at random from the original graph. The train set is further divided in 90\% train and 10\% validation for hyper-parameter tunning.

We compare with the following embedding based methods:

\squishlist
\item \emph{Deepwalk} \cite{perozzi2014deepwalk} computes node embeddings using truncated random walks and the Skipgram model approximated by hierarchical softmax.

\vfill\break

\vfill\break

\item \emph{Node2vec} \cite{grover2016node2vec} is a generalization of DeepWalk which approximates the Skipgram model by means of negative sampling. The model uses two parameters $p$ and $q$ that interpolate between the importance of BFS/DFS random walk strategies.

\item \emph{Struc2vec} \cite{ribeiro2017struc2vec} extracts structural information from the graph through node pair similarities for a range of neighbourhood sizes. This information is then summarized as a multi-layer weighted graph. A random walk on this graph is used to generate the embeddings. 

\item \emph{Role2vec} \cite{Ahmed18Role2vec} is a space-efficient random walk based approach which learns embeddings for different types of nodes.

\item \emph{LINE} \cite{tang2015line} is a probabilistic approach which computes node embeddings based on first and second order similarities between network nodes.

\item \emph{SDNE} \cite{wang2016sdne} uses a deep auto-encoder to generate embeddings which capture first and second order proximities.

\item \emph{CNE} \cite{kang2018cne} uses a Bayesian approach to generate embeddings using structural graph properties as priors. 

\item \emph{AROPE} \cite{Zhang2018ArbitraryOrderPP} proposes embeddings as found by the truncated singular value decompositions of higher order proximities.

\squishend
The link prediction heuristics and embedding methods AROPE, CNE and MaxEnt provide node similarities which can be directly interpreted as probabilities of linking nodes. For the remaining methods we apply Logistic Regression with 5-fold cross validation on the edge embeddings to obtain link predictions.

We set $d=128$ for all methods. The following method parameters were tuned using grid search: for DeepWalk and Struc2vec window sizes $\{5,10,20\}$, for Node2vec and Role2vec $p=q \in \{0.5,1,2\}$ and window sizes $\{5,10,20\}$, for LINE learning rate $\rho \in \{0.01, 0.025\}$ and number of negative samples $\{5M, 10M\}$, for SDNE $\beta \in \{2,5,10\}$, for AROPE higher-order proximities of orders 1 up to 4 and for CNE the learning rate $\alpha \in \{0.01, 0.05\}$.

Most  methods require a binary operator that transforms node embeddings into edge embeddings. We use the same operators proposed in \cite{grover2016node2vec}, and tuned them as additional method hyper-parameters.

%, similarly as discussed in Section~\ref{sec:polyhop}.
%and the polyomials: $\mathbf{\hat{A}}+0.1\mathbf{\hat{A}^2}+0.01\mathbf{\hat{A}^3}+0.001\mathbf{\hat{A}^4}$ and $\mathbf{\hat{A}}+0.5\mathbf{\hat{A}^2}+0.05\mathbf{\hat{A}^3}+0.005\mathbf{\hat{A}^4}$. And finally, 
\end{document}